\newcommand{\R}{\mathbb{R}}
\DeclareMathOperator*{\E}{\mathbb{E}}
\newtheorem{theorem}{Theorem}
\newtheorem{lemma}[theorem]{Lemma}
\newcommand{\eps}{\epsilon}
\newcommand{\etal}{\emph{et al.}\xspace}
\newcommand{\LineComment}[1]{\textcolor{blue}{$\langle\langle$  #1 $\rangle\rangle$ } }
\newcommand{\zero}{\vec{0}}
\newcommand{\one}{\vec{1}}
\newcommand{\xopt}{\vec{x}^*}
\newcommand{\xstart}{\vec{x}_{\mathrm{start}}}
\newcommand{\xend}{\vec{x}_{\mathrm{end}}}
\newcommand{\zstart}{\vec{z}_{\mathrm{start}}}
\newcommand{\zend}{\vec{z}_{\mathrm{end}}}
\newcommand{\vstart}{v_{\mathrm{start}}}
\newcommand{\vend}{v_{\mathrm{end}}}
\newcommand{\x}{\vec{x}}
\newcommand{\z}{\vec{z}}
\newcommand{\g}{\vec{g}}
\renewcommand{\u}{\vec{u}}
\renewcommand{\v}{\vec{v}}
\newcommand{\w}{\vec{w}}
\renewcommand{\a}{\vec{a}}
\renewcommand{\b}{\vec{b}}
\renewcommand{\c}{\vec{c}}
\renewcommand{\d}{\vec{d}}
\newcommand{\m}{\vec{m}}
\newcommand{\zet}{\vec{z}(\eta)}
\newcommand{\get}{\vec{g}(\eta)}
\newcommand{\Set}{S(\eta)}
\newcommand{\Tet}{T(\eta)}
\newcommand{\xp}{\vec{x}'}
\newcommand{\zp}{\vec{z}'}
\begin{document}

\title{Parallel Algorithm for Non-Monotone DR-Submodular Maximization}
\author{
Alina Ene\thanks{Department of Computer Science, Boston University, {\tt aene@bu.edu}.}
\and
Huy L. Nguy\~{\^{e}}n\thanks{College of Computer and Information Science, Northeastern University, {\tt hlnguyen@cs.princeton.edu}.} 
}
\date{}

\maketitle

\begin{abstract}
In this work, we give a new parallel algorithm for the problem of maximizing a non-monotone diminishing returns submodular function subject to a cardinality constraint. For any desired accuracy $\epsilon$, our algorithm achieves a $1/e - \epsilon$ approximation using $O(\log{n} \log(1/\epsilon) / \epsilon^3)$ parallel rounds of function evaluations. The approximation guarantee nearly matches the best approximation guarantee known for the problem in the sequential setting and the number of parallel rounds is nearly-optimal for any constant $\epsilon$. Previous algorithms achieve worse approximation guarantees using $\Omega(\log^2{n})$ parallel rounds. Our experimental evaluation suggests that our algorithm obtains solutions whose objective value nearly matches the value obtained by the state of the art sequential algorithms, and it outperforms previous parallel algorithms in number of parallel rounds, iterations, and solution quality.
\end{abstract}

\section{Introduction}

In this paper, we study parallel algorithms for the problem of maximizing a non-monotone DR-submodular function subject to a single cardinality constraint\footnote{A DR-submodular function $f$ is a continuous function with the diminishing returns property: if $x\le y$ coordinate-wise then $\nabla f(x) \ge \nabla f(y)$ coordinate-wise.}. The problem is a generalization of submodular maximization subject to a cardinality constraint. Many recent works have shown that DR-submodular maximization has a wide-range of applications beyond submodular maximization. These applications include maximum a-posteriori (MAP) inference for determinantal point processes (DPP), mean-field inference in log-submodular models, quadratic programming, and revenue maximization in social networks~\cite{kulesza2012determinantal,gillenwater2012near,bian2016guaranteed,ito2016large,soma2017non,bian2017continuous,bian2018optimal}. 

The problem of maximizing a DR-submodular function subject to a convex constraint is a notable example of a \emph{non-convex} optimization problem that can be solved with \emph{provable} approximation guarantees. The continuous Greedy algorithm \cite{Vondrak2008} developed in the context of the multilinear relaxation framework  applies more generally to maximizing DR-submodular functions that are monotone increasing (if $\vec{x} \leq \vec{y}$ coordinate-wise then $f(\vec{x}) \leq f(\vec{y})$). Chekuri \etal \cite{ChekuriJV15} developed algorithms for both monotone and non-monotone DR-submodular maximization subject to packing constraints that are based on the continuous Greedy and multiplicative weights update framework. The work~\cite{bian2017continuous} generalized continuous Greedy for submodular functions to the DR-submodular case and developed Frank-Wolfe-style algorithms for maximizing non-monotone DR-submodular function subject to general convex constraints.

A significant drawback of these algorithms is that they are inherently sequential and adaptive. In fact the highly adaptive nature of these algorithms go back to the classical greedy algorithm for submodular functions: the algorithm sequentially selects the next element based on the marginal gain on top of previous elements. In certain settings such as feature selection~\cite{KEDNG17} evaluating the objective function is a time-consuming procedure and the main bottleneck of the optimization algorithm and therefore, parallelization is a must. Recent lines of work have focused on addressing these shortcomings and understanding the trade-offs between approximation guarantee, parallelization, and adaptivity. Starting with the work of Balkanski and Singer~\cite{BS18}, there have been very recent efforts to understand the tradeoff between approximation guarantee and adaptivity for submodular maximization~\cite{BS18,EN18,BRS18,FMZ18,CQ18,BBY18}. The \emph{adaptivity} of an algorithm is the number of sequential rounds of queries it makes to the evaluation oracle of the function, where in every round the algorithm is allowed to make polynomially-many parallel queries. Recently, the work~\cite{fahrbach2018non} gave an algorithm for maximizing a submodular function subject to a cardinality constraint in $O(\log n/\eps)$ rounds and $0.031-\eps$ approximation. For the general setting of DR-submodular functions with $m$ packing constraints, the work~\cite{ENV19} gave an algorithm with $O(\log (n/\eps)\log(1/\eps) \log (m+n)/\eps^2)$ rounds and $1/e-\eps$ approximation. In the special case of $m=1$ constraint, this algorithm uses $O(\log^2 n / \eps^2)$ rounds.

In this work, we develop a new algorithm for DR-submodular maximization subject to a single cardinality constraint using $O(\log n\log(1/\eps) / \eps^3)$ rounds of adaptivity and obtaining $1/e-\eps$ approximation. For constant $\eps$, the number of rounds is almost a quadratic improvement from $O(\log^2 n)$ in the previous work to the nearly optimal $O(\log n)$ rounds.  

\begin{theorem}
Let $f: [0, 1]^n \rightarrow \R_+$ be a DR-submodular function and $k \in \R_+$. For every $\eps > 0$, there is an algorithm for the problem $\max_{\x \in [0, 1]^n \colon \|\x\|_1 \leq k} f(\x)$ with the following guarantees:
\begin{itemize}
\item The algorithm is deterministic if provided oracle access for evaluating $f$ and its gradient $\nabla f$;
\item The algorithm achieves an approximation guarantee of $\frac{1}{e} -\eps$;
\item The number of rounds of adaptivity is $O\left(\frac{\log n\log(1/\eps)}{\eps^3} \right)$.
\end{itemize}
\end{theorem}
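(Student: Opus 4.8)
The plan is to work directly with the continuous relaxation: the feasible region $P=\{\x\in[0,1]^n:\|\x\|_1\le k\}$ is itself the optimization domain, so no rounding is needed, and it suffices to output a point of $P$ of value at least $(\tfrac1e-\eps)f(\xopt)$, where $\xopt$ is an optimal solution. First I would set up the \emph{measured continuous greedy} in $P$: maintain $\x(t)$, $t\in[0,1]$, with $\x(0)=\zero$ and $\dot\x(t)=(\one-\x(t))\odot\v(t)$, where $\odot$ is the coordinatewise product and $\v(t)\in\argmax_{\v\in P}\langle\nabla f(\x(t))\odot(\one-\x(t)),\v\rangle$. Feasibility of the endpoint is automatic: integrating the constraint along the path, $\|\x(1)\|_1=\sum_i\bigl(1-e^{-\int_0^1 v_i(t)\,dt}\bigr)\le\sum_i\int_0^1 v_i(t)\,dt\le k$, and every coordinate stays in $[0,1)$. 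The approximation guarantee is the standard analysis of the measured continuous greedy for non-monotone maximization over a down-closed polytope: the measured damping keeps $\|\x(t)\|_\infty\le 1-e^{-t}$, so $f(\x(t)\vee\xopt)\ge(1-\|\x(t)\|_\infty)f(\xopt)\ge e^{-t}f(\xopt)$ by a standard lemma for non-monotone DR-submodular functions, and combining this with the optimality of $\v(t)$ (using $\xopt\in P$) and DR-submodularity yields $f(\x(1))\ge\tfrac1e f(\xopt)$. The task thus reduces to implementing a discretization of this process in $O(\log n\log(1/\eps)/\eps^3)$ adaptive rounds at the cost of only an additive $O(\eps)f(\xopt)$.

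What makes a low-adaptivity implementation possible for a single cardinality constraint is that the linear maximization defining $\v(t)$ is trivial to describe — ordering the coordinates by $g_i(\x):=(1-x_i)\,\partial_i f(\x)$, the maximizer is the fractional indicator of the $k$ largest values among those with $g_i>0$ — and, crucially, DR-submodularity makes every $g_i$ non-increasing as mass is added, so the set of coordinates above any fixed threshold only shrinks along the trajectory. I would therefore implement the process with an outer loop of $O(1/\eps)$ phases, each advancing $t$ by $\Theta(\eps)$, and inside a phase an \emph{adaptive thresholding} subroutine: maintain a threshold $\tau$ and repeatedly add a small measured increment to all coordinates with $g_i(\x)\ge\tau$ (capping the mass added within the phase); whenever a coordinate would drop below $\tau$, or the phase budget is exhausted, use binary search on the step length — costing $O(\log(1/\eps))$ rounds — to locate that event, then multiply $\tau$ by $(1-\eps)$. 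Since coordinates with $g_i$ below an $\eps/n$ fraction of the current maximum contribute negligibly over a phase, only $O(\log(n/\eps)/\eps)$ thresholds are ever used, and re-evaluating $f$ and $\nabla f$ after each batched increment uses a single parallel round. A careful accounting of the rounds across all phases, threshold levels, sub-steps, and binary searches gives the claimed $O(\log n\log(1/\eps)/\eps^3)$.

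For the approximation analysis of the discretized algorithm, I would show that each phase contributes at least a $(1-O(\eps))$ fraction of what the idealized continuous greedy contributes over the corresponding time interval. Two errors must be controlled: quantizing the direction to a super-threshold set loses at most a $(1-\eps)$ factor in $\langle\nabla f\odot(\one-\x),\cdot\rangle$ relative to the exact top-$k$ direction (which is exactly why the thresholds decrease geometrically by $(1-\eps)$), and the downward drift of the $g_i$'s while mass is being added within a step is, by DR-submodularity together with the smallness of the per-step increment, at most an $O(\eps)$ fraction of the gain. Summing the per-phase guarantees reproduces the discrete analogue of the continuous-greedy recursion up to an additive $O(\eps)f(\xopt)$ and hence gives $f(\x_{\mathrm{out}})\ge(\tfrac1e-\eps)f(\xopt)$; feasibility is preserved throughout because coordinates only increase, negative-marginal coordinates are never pushed, the measured factor $(\one-\x)$ forces $\|\x\|_1$ to stay below the total mass injected, and the phase budgets sum to $k$. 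Determinism is immediate given the gradient oracle, since every operation — sorting by $g_i$, forming threshold sets, and binary search — is deterministic.

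The hard part will be controlling accuracy and adaptivity simultaneously inside a phase. Charging each threshold level to essentially one adaptive round forces the per-step mass increment to be small enough that $g$ barely changes; but then staying within the round budget requires showing that the threshold makes steady multiplicative progress — that one cannot spend many rounds stuck at a single threshold without either injecting a constant fraction of the phase's budget or provably losing only $O(\eps)$ of the available gain. Making this trade-off quantitatively tight, and robust to the fact that the ``right'' step length is a moving target that the binary search must chase, is the crux of the argument; by contrast the non-monotonicity is handled essentially for free, since the measured update already keeps the iterate away from the corners of $P$ and coordinates of negative marginal are simply never pushed, so it only complicates the bookkeeping rather than the core estimates.
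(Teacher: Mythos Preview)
Your high-level framework---measured continuous greedy, $O(1/\eps)$ phases, geometric thresholds on the damped gradient $g_i=(1-x_i)\nabla_i f$, batched updates---is exactly the paper's. The gap is in the inner loop, precisely at the point you flag as ``the crux.'' Your proposed mechanism (binary-search for the step where ``a coordinate would drop below $\tau$,'' then lower $\tau$) does not by itself bound the number of iterations at a fixed threshold: a single coordinate falling out does not force $|S_\tau|$ to shrink geometrically, so you cannot charge each threshold level to $O(\log(1/\eps))$ rounds without a further argument. Indeed, your own round accounting is inconsistent: $O(1/\eps)$ phases times $O(\log(n/\eps)/\eps)$ thresholds times $O(\log(1/\eps))$ binary-search rounds is $O(\log n\log(1/\eps)/\eps^2)$, not $\eps^3$, which signals that an unaccounted ``sub-steps'' factor is doing real work you have not pinned down.

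The paper resolves this with two ideas you are missing. First, the step-size rule is not ``until one coordinate drops'' but rather: take the largest $\eta\le\eps^2$ such that at least a $(1-\eps)$ fraction of $S$ still has $g_i(\eta)\ge v$. This guarantees that every non-maximal step shrinks $|S|$ by a $(1-\eps)$ factor, giving $O(\log n/\eps)$ iterations per threshold; combined with only $O(\log(1/\eps)/\eps)$ thresholds per phase (the threshold is cut off at $\eps v_{\mathrm{start}}$, not at an $\eps/n$ fraction), one gets the stated $O(\log n\log(1/\eps)/\eps^3)$. Stopping the threshold so early is what forces the delicate Case~2 analysis (the set $A$ of coordinates whose damped gradient is still large at phase end, and the coordinate-wise lower bound on $z_{\mathrm{end}}-z_{\mathrm{start}}$ over $A$). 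Second, the paper maintains \emph{two} iterates $\x\le\z$: $\z$ is pushed on all of $S$ (so its $\ell_\infty$ norm is controlled for the $f(\z\vee\xopt)\ge(1-\|\z\|_\infty)f(\xopt)$ bound), while $\x$ is pushed only on the positive-gradient subset $T(\eta)\subseteq S$ (so $f(\x)$ never decreases), with the invariant $f(\x)\ge f(\z)$ restored after each step. Your single-iterate description (``negative-marginal coordinates are never pushed'') does not give you both properties simultaneously, and the approximation analysis needs both.
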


\section{Preliminaries}
In this paper, we consider non-negative functions $f: [0, 1]^n \rightarrow \R_+$ that are \emph{diminishing returns submodular} (DR-submodular). A function is DR-submodular if $\forall \vec{x} \leq \vec{y} \in [0, 1]^n$ (where $\leq$ is coordinate-wise), $\forall i \in [n]$, $\forall \delta \in [0, 1]$ such that $\vec{x} + \delta \vec{1}_i$ and $\vec{y} + \delta \vec{1}_i$ are still in $[0, 1]^n$, it holds
  \[f(\vec{x} + \delta \vec{1}_i) - f(\vec{x}) \geq f(\vec{y} + \delta \vec{1}_i) - f(\vec{y}),\]
where $\vec{1}_i$ is the $i$-th basis vector, i.e., the vector whose $i$-th entry is $1$ and all other entries are $0$.

If $f$ is differentiable, $f$ is DR-submodular if and only if $\nabla f(\vec{x}) \geq \nabla f(\vec{y})$ for all $\vec{x} \leq \vec{y} \in [0, 1]^n$. If $f$ is twice-differentiable, $f$ is DR-submodular if and only if all the entries of the Hessian are \emph{non-positive}, i.e., $\frac{\partial^2 f}{\partial x_i \partial x_j}(\vec{x}) \leq 0$ for all $i, j \in [n]$. 

For simplicity, throughout the paper, we assume that $f$ is differentiable. We assume that we are given black-box access to an oracle for evaluating $f$ and its gradient $\nabla f$. It is convenient to extend the function $f$ to $\R^n_+$ as follows: $f(\vec{x}) = f(\vec{x} \wedge \vec{1})$, where $(\vec{x} \wedge \vec{1})_i = \min\{x_i, 1\}$.

An example of a DR-submodular function is the multilinear extension of a submodular function. The multilinear extension $f: [0, 1]^V \rightarrow \R$ of a submodular function $F: 2^V \rightarrow \R$ is defined as follows:
  \[ f(\vec{x}) = \E[F(R(\vec{x}))] = \sum_{S \subseteq V} F(S) \prod_{i \in S} \vec{x}_i \prod_{i \in V \setminus S} (1 - \vec{x}_i),\]
where $R(\vec{x})$ is a random subset of $V$ where each $i \in V$ is included independently at random with probability $\vec{x}_i$.

\medskip
{\bf Basic notation.} We use e.g. $\vec{x} = (\vec{x}_1, \dots, \vec{x}_n)$ to denote a vector in $\R^n$. We use the following vector operations: $\vec{x} \vee \vec{y}$ is the vector whose $i$-th coordinate is $\max\{x_i, y_i\}$; $\vec{x} \wedge \vec{y}$ is the vector whose $i$-th coordinate is $\min\{x_i, y_i\}$; $\vec{x} \circ \vec{y}$ is the vector whose $i$-th coordinate is $x_i \cdot y_i$. We write $\vec{x} \leq \vec{y}$ to denote that $\vec{x}_i \leq \vec{y}_i$ for all $i \in [n]$. Let $\vec{0}$ (resp. $\vec{1}$) be the $n$-dimensional all-zeros (resp. all-ones) vector. Let $\vec{1}_S \in \{0, 1\}^V$ denote the indicator vector of $S \subseteq V$, i.e., the vector that has a $1$ in entry $i$ if and only if $i \in S$.

We will use the following result that was shown in previous work~\cite{ChekuriJV15}.

\begin{lemma}[\cite{ChekuriJV15}, Lemma~7]
\label{lem:x-or-opt}
Let $f: [0, 1]^n \rightarrow \R_+$ be a DR-submodular function. For all $\vec{x}^* \in [0, 1]^n$ and $\vec{x} \in [0, 1]^n$, $f(\vec{x}^* \vee \vec{x}) \geq (1 - \|\vec{x}\|_{\infty}) f(\vec{x}^*)$.
\end{lemma}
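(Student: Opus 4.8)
The plan is to connect $\vec{x}^*$ to $\vec{x}^* \vee \vec{x}$ by a line segment along which $f$ is concave, chosen so that when the segment is extended all the way to parameter $1$ it still lands inside the cube $[0,1]^n$; then concavity together with $f\ge 0$ gives the bound. Set $\lambda = \|\vec{x}\|_\infty$. If $\lambda = 0$ then $\vec{x} = \vec{0}$ and $\vec{x}^* \vee \vec{x} = \vec{x}^*$, so the inequality holds with equality; hence I may assume $\lambda \in (0,1]$. I would let $\vec{w} = \frac{1}{\lambda}\big((\vec{x}^* \vee \vec{x}) - \vec{x}^*\big) \ge \vec{0}$ (its $i$-th coordinate is $\frac{1}{\lambda}\max\{0,\ \vec{x}_i - \vec{x}^*_i\}$) and define $\vec{z}(t) = \vec{x}^* + t\vec{w}$ for $t \in [0,1]$, so that $\vec{z}(0) = \vec{x}^*$ and $\vec{z}(\lambda) = \vec{x}^* \vee \vec{x}$.

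The key step is to verify that $\vec{z}(t) \in [0,1]^n$ for every $t \in [0,1]$; since $\vec{w} \ge \vec{0}$ and $\vec{z}(t) \ge \vec{x}^* \ge \vec{0}$ it suffices to check $\vec{z}(1) \le \vec{1}$. For a coordinate $i$ with $\vec{x}_i \le \vec{x}^*_i$ we have $\vec{w}_i = 0$ and $\vec{z}(1)_i = \vec{x}^*_i \le 1$. For a coordinate with $\vec{x}_i > \vec{x}^*_i$, we have $\vec{z}(1)_i = \vec{x}^*_i + \frac{1}{\lambda}(\vec{x}_i - \vec{x}^*_i)$, and this is at most $1$ precisely because $\vec{x}_i \le \lambda$: indeed $\vec{x}_i - \vec{x}^*_i \le \lambda - \vec{x}^*_i \le \lambda(1 - \vec{x}^*_i)$, so $\frac{1}{\lambda}(\vec{x}_i - \vec{x}^*_i) \le 1 - \vec{x}^*_i$. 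This is the only place the hypothesis on $\|\vec{x}\|_\infty$ is used, and it is the main (if small) point to get right — one must normalize by exactly $1/\lambda$ so that $\vec{z}(1)$ reaches the boundary of the cube rather than exiting it. In particular $f(\vec{z}(1)) \ge 0$, since $f$ is non-negative on $[0,1]^n$.

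Finally I would argue that $g(t) := f(\vec{z}(t)) = f(\vec{x}^* + t\vec{w})$ is concave on $[0,1]$: its derivative is $\langle \nabla f(\vec{x}^* + t\vec{w}),\, \vec{w}\rangle$, and as $t$ increases the point $\vec{x}^* + t\vec{w}$ increases coordinate-wise, so $\nabla f$ decreases coordinate-wise by DR-submodularity; pairing with $\vec{w} \ge \vec{0}$ shows $g'$ is non-increasing. Writing $\lambda = \lambda\cdot 1 + (1-\lambda)\cdot 0$ and applying concavity,
\[
f(\vec{x}^* \vee \vec{x}) = g(\lambda) \ \ge\ \lambda\, g(1) + (1-\lambda)\, g(0) \ =\ \lambda\, f(\vec{z}(1)) + (1-\lambda)\, f(\vec{x}^*) \ \ge\ (1-\lambda)\, f(\vec{x}^*),
\]
where the last inequality uses $f(\vec{z}(1)) \ge 0$ and $\lambda \ge 0$. (If one wishes to avoid differentiability, the same concavity of $g$ follows directly from the defining difference inequality of DR-submodularity by midpoint concavity plus continuity, so the argument applies to general DR-submodular $f$.)
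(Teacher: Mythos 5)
Your proof is correct: rescaling the increment $(\vec{x}^*\vee\vec{x})-\vec{x}^*$ by $1/\|\vec{x}\|_\infty$ so that the segment's endpoint $\vec{z}(1)$ stays in $[0,1]^n$, and then combining concavity of $f$ along non-negative directions with $f\ge 0$, is exactly the standard argument for this fact. The paper itself does not prove this lemma but imports it from \cite{ChekuriJV15} (Lemma~7), and your argument is essentially the one given there, so there is nothing further to compare.
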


\section{The algorithm}
\label{sec:algo}

In this section, we present an idealized version of our algorithm where we assume that we can compute exactly the step size on line \ref{line:eta1}. The idealized algorithm is given in Algorithm~\ref{alg:non-monotone-ideal}. In the appendix (Section~\ref{app:approx-step-size}), we show how to implement that step efficiently and incur only $O(\eps)$ additive error in the approximation.

The algorithm takes as input a target value $M$ and it achieves the desired $1/e - O(\eps)$ approximation if $M$ is an $(1 + \eps)$ approximation of the optimal function value $f(\xopt)$, i.e., we have $f(\xopt) \leq M \leq (1 + \eps) f(\xopt)$. As noted in previous work~\cite{ENV19}, it is straightforward to approximately guess such a value $M$ using a single parallel round.

\begin{algorithm}
\caption{Algorithm for $\max_{\vec{x} \in [0, 1]^n \colon \|\vec{x}\|_1 \leq k} f(\vec{x})$, where $f$ is a non-negative DR-submodular function. The algorithm takes as input a target value $M$ such that $f(\xopt) \leq M \leq (1 + \eps) f(\xopt)$.} 
\label{alg:non-monotone-ideal}
\begin{algorithmic}[1]
\State $\x \gets \zero$
\State $\z \gets \zero$
\For{$j = 1$ to $1/\eps$}
  \State \LineComment{Start of phase $j$}
  \State $\xstart \gets \x$
  \State $\zstart \gets \z$
  \State $\vstart \gets \frac{1}{k}(((1 - \eps)^j - 2\eps) M - f(\x))$ \label{line:vstart}
  \State $v \gets \vstart$
  \While{$v > \eps \vstart$ and $\|\z\|_1 < \eps jk$}
    \State $\g = (\one - \z) \circ \nabla f(\z)$
    \State $S = \{i \in [n] \colon \g_i \geq v \text{ and } \z_i \leq 1 - (1 - \eps)^j \text{ and } \z_i - (\zstart)_i < \eps (1 - (\zstart)_i)\}$ \label{line:S}
    \If{$S = \emptyset$}
      \State $v \gets (1 - \eps) v$ \label{line:update-v}
    \Else
      \State For a given $\eta \in [0, \eps^2]$, we define:
      \begin{align*}
        \zet &= \z + \eta (\one - \z) \circ \one_S\\
        \get &= (\one - \zet) \circ \nabla f(\zet)\\
        \Set &= \{i \in S \colon \get_i \geq v \}\\
        \Tet &= \{i \in S \colon \get_i > 0 \}
      \end{align*}
      \label{line:defs}
      \State Let $\eta_1$ be the maximum $\eta \in [0, \eps^2]$ such that $|\Set| \geq (1 - \eps) |S|$ \label{line:eta1}
      \State \LineComment{$\eta_2 = \min\left\{\eps^2,  \frac{\eps jk - \|\z\|_1}{|S| - \|\z \circ \one_S\|_1} \right\}$}
      \State Let $\eta_2$ be the maximum $\eta \in [0, \eps^2]$ such that $\|\zet\|_1 \leq \eps jk$ \label{line:eta2}
      \State $\eta \gets \min\{\eta_1, \eta_2\}$ \label{line:eta}
      \State $\x \gets \x + \eta (\one - \x) \circ \one_{\Tet}$ \label{line:update-x}
      \State $\z \gets \z + \eta (\one - \z) \circ \one_S$ \label{line:update-z}
      \If{$f(\z) > f(\x)$}
        \State $\x \gets \z$ \label{line:fz-larger}
      \EndIf
    \EndIf
  \EndWhile
\EndFor
\State \Return $\x$
\end{algorithmic}
\end{algorithm}

{\bf Finding the step size $\eta_1$ on line~\ref{line:eta1}.} As mentioned earlier, we assume that we can find the step $\eta_1$ exactly. In the appendix, we show that we can efficiently find $\eta_1$ approximately using $t$-ary search for suitable $t$. We can choose $t$ to obtain different trade-offs between the number of parallel rounds and total running time, see Section~\ref{app:approx-step-size} in the appendix for more details.

\medskip
{\bf Finding the step size $\eta_2$ on line~\ref{line:eta2}.} We have $\vec{z}(\eta) \geq \vec{0}$ and
\begin{align*}
\sum_{i \in [n]} \vec{z}_i(\eta) &= \sum_{i \in S} (\vec{z}_i + \eta (1 - \vec{z}_i)) + \sum_{i \notin S} \vec{z}_i\\
&= \sum_{i \in [n]} \vec{z}_i + \eta \sum_{i \in S}(1 - \vec{z}_i)
\end{align*}
Additionally, for each $i \in S$, we have $\vec{z}_i \leq 1 - e^{-t} < 1$ and thus $\sum_{i \in S} (1 - \vec{z}_i) > 0$. Therefore $\eta_2$ is the minimum between $\eps^2$ and the following value:
\[ \frac{\eps jk - \sum_{i \in [n]} \vec{z}_i}{\sum_{i \in S}(1 - \vec{z}_i)} = \frac{\eps jk - \|\vec{z}\|_1}{|S| - \|\vec{z} \circ \vec{1}_S\|_1} \]

\section{Analysis of the approximation guarantee}
\label{sec:approx}

In this section, we show that Algorithm~\ref{alg:non-monotone-ideal} achieves a $\frac{1}{e} - O(\eps)$ approximation. Recall that we assume that $\eta_1$ is computed exactly on line~\ref{line:eta1}. In Section~\ref{app:approx-step-size} of the appendix, we show how to extend the algorithm and the analysis so that the algorithm efficiently computes a suitable approximation to $\eta_1$ that suffices for obtaining a $\frac{1}{e} - O(\eps)$ approximation.

In the following, we refer to each iteration of the outer for loop as a \emph{phase}. We refer to each iteration of the inner while loop as an iteration. Note that the update vectors are non-negative in each iteration of the algorithm, and thus the vectors $\x, \z$ remain non-negative throughout the algorithm and they can only increase. Additionally, since $\Set \subseteq \Tet \subseteq S$, we have $\x \leq \z$ throughout the algorithm. We will also use the following observations repeatedly, whose straightforward proofs are deferred to Section~\ref{app:omitted} of the appendix. By DR-submodularity, since the relevant vectors can only increase in each coordinate, the relevant gradients can only decrease in each coordinate. This implies that, for every $\eta \leq \eta'$, we have $S(\eta) \supseteq S(\eta')$. Additionally, for every $i \in \Tet$, we have $\nabla_i f(\x) \geq \nabla_i f(\z) \geq \nabla_i f(\zet) > 0$.

We will need an upper bound on the $\ell_1$ and $\ell_{\infty}$ norms of $\x$ and $\z$. Since $\x \leq \z$, it suffices to upper bound the norms of $\z$ (the $\ell_1$ norm bound will be used to show that the final solution is feasible, and the $\ell_{\infty}$ norm bound will be used to derive the approximation guarantee). We do so in the following lemma.

\begin{lemma}
\label{lem:z-norms}
Consider phase $j$ of the algorithm (the $j$-th iteration of the outer for loop). Throughout the phase, the algorithm maintains the invariant that $\|\vec{z}\|_{\infty} \leq 1 - (1 - \eps)^j + \eps^2$ and $\|\vec{z}\|_1 \leq \eps j k$.
\end{lemma}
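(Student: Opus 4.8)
The plan is to prove both invariants by induction on the iterations within phase $j$, assuming as the induction hypothesis from the previous phase that $\|\z\|_\infty \le 1 - (1-\eps)^{j-1} + \eps^2$ and $\|\z\|_1 \le \eps(j-1)k$ at the start of phase $j$ (the base case $j=0$, or equivalently the very start with $\z = \zero$, being trivial). Actually it is cleaner to track the $\ell_\infty$ bound coordinate-wise: I will show that after the updates of phase $j$, every coordinate satisfies $\z_i \le 1 - (1-\eps)^j + \eps^2$. Two sub-arguments are needed, one for each norm.

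For the $\ell_\infty$ bound, fix a coordinate $i$ and consider the last iteration of phase $j$ in which $i \in S$ (if there is none, $\z_i$ does not change during the phase and the bound follows from the previous-phase hypothesis, since $1-(1-\eps)^{j-1} \le 1-(1-\eps)^j$). In that iteration, the membership test on line~\ref{line:S} guarantees $\z_i \le 1-(1-\eps)^j$ \emph{before} the update, and the update on line~\ref{line:update-z} increases $\z_i$ to $\z_i + \eta(1-\z_i)$ with $\eta \le \eps^2$; hence the new value is at most $(1-\eps^2)\z_i + \eps^2 \le \z_i + \eps^2 \le 1 - (1-\eps)^j + \eps^2$. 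Since this was the last iteration in which coordinate $i$ was updated in phase $j$, the bound persists to the end of the phase. The key point making this work is precisely the threshold $\z_i \le 1-(1-\eps)^j$ built into the definition of $S$, together with $\eta \le \eps^2$.

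For the $\ell_1$ bound, the argument is via the choice of $\eta_2$ on line~\ref{line:eta2}: by construction $\eta \le \eta_2$ is the largest step in $[0,\eps^2]$ keeping $\|\zet\|_1 \le \eps j k$, and from the displayed computation $\|\z(\eta)\|_1 = \|\z\|_1 + \eta\sum_{i\in S}(1-\z_i)$ is monotone increasing in $\eta$ with $\sum_{i\in S}(1-\z_i) > 0$, so immediately after every update we have $\|\z\|_1 \le \eps j k$. This holds in particular at the end of the phase, and it also matches the start-of-phase hypothesis for phase $j+1$ since $\eps j k \le \eps j k$; the start-of-phase-$j$ hypothesis $\|\z\|_1 \le \eps(j-1)k < \eps jk$ makes $\eta_2$ well-defined (the numerator $\eps jk - \|\z\|_1$ is positive). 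One should also note that the while-loop guard $\|\z\|_1 < \eps jk$ ensures we only ever enter the body when there is still slack, so $\eta_2 > 0$.

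The main obstacle is bookkeeping rather than depth: one must be careful that the two invariants are mutually consistent across the phase boundary, i.e. that the end-of-phase-$(j-1)$ guarantees are exactly what is needed to seed phase $j$, and that within a phase the $\ell_\infty$ bound is not spoiled by the extra $+\eps^2$ slack accumulating — it does not, because the slack is reset each phase thanks to the $1-(1-\eps)^j$ threshold tightening with $j$, and within a single phase a coordinate can be pushed past $1-(1-\eps)^j$ by at most one step of size $\le \eps^2$ regardless of how many times it is updated (each update after the threshold is violated would exclude $i$ from $S$). Making the "last iteration in which $i \in S$" argument rigorous, using the monotonicity $S(\eta) \supseteq S(\eta')$ for $\eta \le \eta'$ recalled in the text, is the one spot needing a little care.
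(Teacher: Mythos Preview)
Your proposal is correct and follows essentially the same approach as the paper. The only cosmetic difference is that the paper argues the $\ell_\infty$ bound by a direct per-iteration induction (if $i\in S$ then the pre-update value already satisfies $\z_i\le 1-(1-\eps)^j$ by the definition of $S$, so the post-update value is at most $1-(1-\eps)^j+\eps^2$; if $i\notin S$ the value is unchanged), whereas you phrase it as ``look at the last iteration in which $i\in S$''---these are equivalent, and both rely on the same two ingredients: the threshold in the definition of $S$ and the step cap $\eta\le\eps^2$. Your $\ell_1$ argument via $\eta\le\eta_2$ is exactly the paper's. (The remark about $S(\eta)\supseteq S(\eta')$ is not actually needed here; the sets $S$ you compare are across iterations, not across $\eta$ within one iteration.)
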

\begin{proof}
We show that the invariants are maintained using induction on the number of iterations of the inner while loop in phase $j$. Let $\z$ be the vector right before the update on line~\ref{line:update-z} and let $\zp$ be the vector right after the update. By the induction hypothesis, we have $\z_i \leq 1 - (1 - \eps)^j + \eps^2$. If $i \notin S$, we have $\vec{z}'_i = \vec{z}_i$, and the invariant is maintained. Therefore we may assume that $i \in S$. By the definition of $S$, we have $\z_i \leq 1 - (1 - \eps)^j$. We have $\zp_i = \z_i + \eta (1 - \zp_i) \leq \z_i + \eta \leq \z_i + \eps^2$. Thus the invariant is maintained.

Next, we show the upper bound on the $\ell_1$ norm. Note that $\zp = \zet \leq \z(\eta_2)$, where $\eta$ is the step size chosen on line~\ref{line:eta}. Thus we have $\|\zp\|_1 \leq \|\vec{z}(\eta_2)\|_1 \leq \eps jk$, where the last inequality is by the choice of $\eta_2$.
\end{proof}

\begin{theorem}
\label{thm:phase-gain}
Consider a phase of the algorithm (an iteration of the outer for loop). Let $\xstart$ and $\xend$ be the vector $\x$ at the beginning and end of the phase. We have
\[ f(\xend) - f(\xstart) \geq (1 - 5\eps) \eps ((1 - \eps)^j f(\xopt) - f(\xend) - 3 \eps f(\xopt)) \]
\end{theorem}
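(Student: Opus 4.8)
The plan is to prove the per‑phase guarantee in three stages: an ``engine'' bound on the $f(\x)$‑gain of a single step‑iteration, a structural consequence of the event $S=\emptyset$, and then a combination that splits on how the phase terminates.

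\textbf{Step 1: per‑iteration gain.} Consider an iteration that executes the else‑branch, with step size $\eta=\min\{\eta_1,\eta_2\}$; write $\xp$ for $\x$ after line~\ref{line:update-x} and $\zet$ for $\z$ after line~\ref{line:update-z}. Since $\x\le\z$ and all updates only raise coordinates, one checks $\xp\le\zet$ coordinatewise, so by the fundamental theorem of calculus along the nonnegative segment from $\x$ to $\xp$ and DR‑submodularity (the gradient can only decrease),
\[
 f(\xp)-f(\x)\ \ge\ \langle\nabla f(\zet),\,\xp-\x\rangle\ =\ \sum_{i\in\Tet}\eta(1-\x_i)\nabla_i f(\zet)\ \ge\ \eta\sum_{i\in\Tet}\get_i,
\]
using $1-\x_i\ge 1-\z_i$ and $\get_i>0$ on $\Tet$. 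As $v>0$ we have $\Set(\eta)\subseteq\Tet$ with $\get_i\ge v$ there, and $\eta\le\eta_1$ forces $|S(\eta)|\ge(1-\eps)|S|$, so the right‑hand side is at least $(1-\eps)v\,\eta|S|$. Since $\|\z\|_1$ grows by exactly $\eta\sum_{i\in S}(1-\z_i)\le\eta|S|$ and line~\ref{line:fz-larger} only increases $f(\x)$, the $f(\x)$‑gain of the iteration is at least $(1-\eps)v$ times the increase of $\|\z\|_1$ (and this also keeps $f(\x)\ge f(\z)$ at every iteration boundary). Summing over all step‑iterations of the phase: if the phase spends $\ell_1$‑budget $B$, all at threshold $\ge u$, then $f(\xend)-f(\xstart)\ge(1-\eps)\,u\,B$.

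\textbf{Step 2: structural bound when $S=\emptyset$.} At an iteration boundary of phase $j$ with $S=\emptyset$ for the current $\z,v$, write $(\xopt-\z)^+$ for the coordinatewise positive part and bound
\[
 f(\xopt\vee\z)-f(\z)\ \le\ \langle\nabla f(\z),(\xopt-\z)^+\rangle\ \le\ \sum_{i:\,\nabla_i f(\z)>0}\nabla_i f(\z)\,(\xopt_i-\z_i)^+,
\]
and split by which clause of the definition of $S$ fails. Coordinates with $\g_i<v$ contribute at most $v\sum_i\xopt_i\le vk$ (using $(\xopt_i-\z_i)^+/(1-\z_i)\le\xopt_i\le1$). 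Coordinates excluded only because $\z_i>1-(1-\eps)^j$ or because $\z_i-(\zstart)_i\ge\eps(1-(\zstart)_i)$ each contribute at most $\g_i$, and one shows this total is $O(\eps)\,f(\xopt)$: by Lemma~\ref{lem:z-norms} there are only $O(k)$ coordinates with $\z_i>1-(1-\eps)^j$ (since $1-(1-\eps)^j\ge\eps j/2$ while $\|\z\|_1\le\eps jk$), and for the ``pushed a lot'' coordinates $1-\z_i\le(1-\eps)(1-(\zstart)_i)$ lets one bound $\g_i$ by the analogous quantity at the start of the phase; in both cases the marginal value of these coordinates has essentially been extracted already. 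Combining with $f(\xopt\vee\z)\ge(1-\|\z\|_\infty)f(\xopt)\ge((1-\eps)^j-\eps^2)f(\xopt)$ (Lemma~\ref{lem:x-or-opt} and the $\ell_\infty$ bound of Lemma~\ref{lem:z-norms}) and $f(\x)\ge f(\z)$ yields $f(\x)\ge\big((1-\eps)^j-O(\eps)\big)f(\xopt)-kv$.

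\textbf{Step 3: combining.} Since $k\vstart=((1-\eps)^j-2\eps)M-f(\xstart)$ and $f(\xopt)\le M\le(1+\eps)f(\xopt)$, we have $k\vstart\le(1-\eps)^j f(\xopt)-f(\xstart)$, and $k\vstart\ge((1-\eps)^j-2\eps)f(\xopt)-f(\xstart)$ when $(1-\eps)^j\ge2\eps$. If $\vstart\le0$ the while loop never runs and $f(\xstart)\ge((1-\eps)^j-2\eps)M\ge((1-\eps)^j-3\eps)f(\xopt)$, so the right‑hand side of the claim is nonpositive and there is nothing to prove. Otherwise the phase ends either with $v\le\eps\vstart$ or with $\|\z\|_1=\eps jk$. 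In the first case the last $S=\emptyset$ event had threshold $v\le\eps\vstart/(1-\eps)$; Step 2 there gives $f(\xend)\ge\big((1-\eps)^j-O(\eps)\big)f(\xopt)-\tfrac{\eps}{1-\eps}\big((1-\eps)^j f(\xopt)-f(\xstart)\big)$, which already beats $((1-\eps)^j-3\eps)f(\xopt)$ up to $O(\eps^2)f(\xopt)$, and a short monotone‑in‑$f(\xend)$ calculation (using $f(\xend)\ge f(\xstart)$) gives the claimed inequality. In the second case at least $\eps k$ of $\ell_1$‑budget is spent (as $\|\zstart\|_1\le\eps(j-1)k$), all at threshold $\ge(1-\eps)u$ where $u$ is the threshold at the last $S=\emptyset$ event (or $u=\vstart$ if there was none); Step 1 gives $f(\xend)-f(\xstart)\ge(1-\eps)^2\eps k\,u$, while Step 2 at that event (or the trivial $ku=k\vstart$ when there was none) gives $ku\gtrsim((1-\eps)^j-O(\eps))f(\xopt)-f(\xend)$, and substituting finishes once the $O(\eps)$ losses are absorbed into the $3\eps f(\xopt)$ and $1-5\eps$ margins.

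\textbf{Main obstacle.} The crux is Step 2: bounding the combined contribution of coordinates dropped from $S$ by the cap $\z_i\le1-(1-\eps)^j$ and by the per‑phase increase cap $\z_i-(\zstart)_i<\eps(1-(\zstart)_i)$ by an \emph{additive} $O(\eps)f(\xopt)$ term rather than merely by $O(\text{gain})$, since a bound proportional to the gain is too weak for the $v\le\eps\vstart$ branch of Step 3. This requires combining DR‑submodularity with both the $\ell_1$ and $\ell_\infty$ invariants of Lemma~\ref{lem:z-norms} to argue that those coordinates' residual marginal value is small. A secondary nuisance is tracking the $(1\pm c\eps)$ factors accumulated across Steps 1–3 so that they fit inside the stated $1-5\eps$ and $3\eps f(\xopt)$ slacks.
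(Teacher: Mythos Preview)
Your overall architecture (a per-iteration gain bound, a structural inequality at the moment $S=\emptyset$, then a case split on how the phase terminates) is close in spirit to the paper's. But Step~2 has a genuine gap that you yourself flag in the ``Main obstacle'' paragraph and do not close.

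Concretely, when $S=\emptyset$ you want $\sum_i \nabla_i f(\z)(\xopt_i-\z_i)^+ \le vk + O(\eps)f(\xopt)$. Coordinates failing the clause $\g_i\ge v$ indeed contribute at most $vk$. For the remaining coordinates your only quantitative argument is a count: there are $O(k)$ indices with $\z_i>1-(1-\eps)^j$. A cardinality bound cannot give $O(\eps)f(\xopt)$ without a per-coordinate bound on $\g_i$, and none is available---these coordinates were excluded from $S$ precisely \emph{not} because $\g_i$ was small. Your fallback sentence, that ``the marginal value of these coordinates has essentially been extracted already,'' translates at best into a bound proportional to $f(\x)-f(\xstart)$, which you correctly note is too weak for the $v\le\eps\vstart$ branch. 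So Step~2 as written does not go through, and since both branches of Step~3 rely on it, the proposal is incomplete.

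The paper avoids this obstacle by organizing the argument the other way around. It fixes $A=\{i:(1-(\zend)_i)\nabla_i f(\zend)\ge \vend/(1-\eps)\}$ and proves two things. First, a refined per-iteration bound (Lemma~\ref{lem:phase-gain-common}) that keeps the term $\langle \g(\eta),\one_{S\cap A}\rangle$ intact instead of collapsing everything to $v|S|$; this is strictly sharper than your Step~1 and is what makes the $\|\zend\|_1=\eps jk$ case work when $\vend$ is not small. Second, a growth lemma (Lemma~\ref{lem:delta-z-A}): each $i\in A$ must have failed one of the other two clauses at the previous threshold, and either forces $(\zend)_i-(\zstart)_i\ge(1-3\eps)\eps(1-(\zend)_i)$. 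Summing the refined per-iteration bound over the phase then gives a lower bound of the form $(1-O(\eps))\eps\,\langle \nabla f(\zend)\vee\zero,(\one-\zend)\circ\xopt\rangle$, and Lemma~\ref{lem:grad-z-xopt} finishes. The key point is that the paper never needs an \emph{upper} bound on the $\g_i$ of high-gradient coordinates; it instead shows that their large gradients have already been converted into $f$-gain through their large growth in $\zend-\zstart$. Your Step~2 attempts the reverse direction---upper bounding their contribution at an $S=\emptyset$ moment---and that seems to genuinely require the missing ingredient you identify but do not supply.
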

\begin{proof}
We consider two cases, depending on whether the threshold $\vend$ at the end of the phase is equal to $\vstart$ or not.

\medskip\noindent
{\bf Case 1:} we have $\vend = \vstart$. Note that the phase terminates with $\|\zend\|_1 = \eps jk$ in this case. We fix an iteration of the phase that updates $\x$ and $\z$ on lines~\ref{line:update-x}--\ref{line:fz-larger}, and analyze the gain in function value in the current iteration. We let $\x, \z$ denote the vectors right before the update on lines~\ref{line:update-x}--\ref{line:fz-larger}. Let $\xp$ be the vector $\x$ right after the update on line~\ref{line:update-x}, and let $\zp$ be the vector $\z$ right after the update on line~\ref{line:update-z}.

We have:
\begin{align*}
f(\xp) - f(\x)
&\overset{(a)}{\geq} \langle \nabla f(\xp), \eta (\one - \x) \circ \one_{\Tet} \rangle\\
&= \langle (\one - \x) \circ \nabla f(\xp), \eta \one_{\Tet} \rangle\\
&\overset{(b)}{\geq} \langle \get, \eta \one_{\Tet} \rangle\\
&\overset{(c)}{\geq}  \eta \vstart \underbrace{|\Set|}_{\geq (1 - \eps) |S|}\\
&\overset{(d)}{\geq} (1 - \eps) \eta \vstart |S|
\end{align*}
In (a), we used the fact that $\vec{x}' - \vec{x} \geq 0$ and $f$ is concave in non-negative directions.

We can show (b) as follows. We have $\xp \leq \zp = \zet$ and thus $\nabla f(\xp) \geq \nabla f(\zet)$ by DR-submodularity. Additionally, for every coordinate $i \in \Tet$, we have $\nabla_i f(\zet) > 0$. Therefore, for every $i \in \Tet$, we have $(1 - \x_i) \nabla_i f(\xp) \geq (1 - \zet_i) \nabla_i f(\zet) = \get_i > 0$.

In (c), we have used that $\Set \subseteq \Tet$, $\get_i > 0$ for all $i \in \Tet$, and $\get_i \geq v = \vstart$ for all $i \in \Set$.

We can show (d) as follows. Since $\eta \leq \eta_1$, we have $|S(\eta)| \geq |S(\eta_1)| \geq (1 - \eps) |S|$, where the first inequality is by Lemma~\ref{lem:monotonicity} and the second inequality is by the choice of $\eta_1$.

Let $\eta_t$ and $S_t$ denote $\eta$ and $S$ in iteration $t$ of the phase (note that we are momentarily overloading $\eta_1$ and $\eta_2$ here, and they temporarily stand for the step size $\eta$ in iterations $1$ and $2$, and not for the step sizes on lines~\ref{line:eta1} and \ref{line:eta2}). By summing up the above inequality over all iterations, we obtain:
\begin{align*}
f(\xend) - f(\xstart)
&\geq (1 - \eps) \vstart \sum_t \eta_t |S_t|\\
&\geq  (1 - \eps) \vstart \underbrace{\|\zend - \zstart\|_1}_{\geq \eps k}\\
&\overset{(a)}{\geq} (1 - \eps) \vstart \eps k\\
&\overset{(b)}{=} \eps (1 - \eps) (((1 - \eps)^j - 2\eps) M - f(\xstart))\\
&\overset{(c)}{\geq} \eps (1 - \eps) ((1 - \eps)^j f(\xopt) - f(\xstart) - 3\eps M)
\end{align*}
We can show (a) as follows. Recall that we have $\|\zend\|_1 = \eps j k$. Since $\|\zstart \|_1 \leq \eps (j - 1) k$, we have $\|\zend - \zstart\|_1 \geq \eps k$.

In (b), we used the definition of $\vstart$ on line~\ref{line:vstart}.

In (c), we used that $f(\xopt) \leq (1 + \eps) M$.

\medskip\noindent
{\bf Case 2:} we have $\vend \neq \vstart$. Note that this implies that $\vend \leq (1 - \eps) \vstart$, since line~\ref{line:update-v} was executed at least once during the phase.

Let $A$ be the following subset of the coordinates:
\[ A := \left\{i \in [n] \colon (1 - (\zend)_i) \nabla_i f(\zend) \geq \frac{\vend}{1 - \eps} \right\} \]

\begin{lemma}
\label{lem:phase-gain-common}
We have
\[f(\xend) - f(\xstart) \geq (1 - \eps) \Big( \vend \|(\zend - \zstart) \circ \one_{\overline{A}}\|_1 + \langle \nabla f(\zend), (\zend - \zstart) \circ \one_{A} \rangle \Big)
 \]
\end{lemma}
\begin{proof}
Fix an iteration of the phase that updates $\x$ and $\z$ on lines~\ref{line:update-x}--\ref{line:fz-larger}. Let $\x, \z$ denote the vectors right before the update on lines~\ref{line:update-x}--\ref{line:fz-larger}. Let $\xp$ be the vector $\x$ right after the update on line~\ref{line:update-x}, and let $\zp$ be the vector $\z$ right after the update on line~\ref{line:update-z}.

We have:
\begin{align*}
f(\xp) - f(\x)
&\overset{(a)}{\geq} \langle \nabla f(\xp), \eta (\one - \x) \circ \one_{\Tet} \rangle\\
&= \langle (\one - \x) \circ \nabla f(\xp), \eta \one_{\Tet} \rangle\\
&\overset{(b)}{\geq} \langle \get, \eta \one_{\Tet} \rangle\\
&= \langle \get, \eta \one_{\Set} \rangle +  \langle \get, \eta \one_{\Tet \setminus \Set} \rangle\\
&= \eta \vend \underbrace{|\Set|}_{\geq (1 - \eps) |S|} + \underbrace{\langle \get - \vend \one, \eta \one_{\Set} \rangle }_{\geq 0} + \underbrace{\langle \get, \eta \one_{\Tet \setminus \Set} \rangle}_{\geq 0} \\
&\overset{(c)}{\geq} (1 - \eps) \eta \left( \vend |S| + \langle \get - \vend \one, \one_{\Set} \rangle + \langle \get, \one_{\Tet \setminus \Set} \rangle \right)\\
&= (1 - \eps) \eta \left( \vend |S| - \vend |\Set| + \langle \get, \one_{\Tet} \rangle \right)\\
&= (1 - \eps) \eta \left( \vend |S| - \vend |\Set| + \langle \get, \one_{\Tet \setminus A} \rangle + \langle \get, \one_{\Tet \cap A} \rangle \right)\\
&\overset{(d)}{=} (1 - \eps) \eta \left( \vend |S| - \vend |\Set| + \langle \get, \one_{\Tet \setminus A} \rangle + \langle \get, \one_{S \cap A} \rangle \right)\\
&\overset{(e)}{\geq} (1 - \eps) \eta \left( \vend |S| - \vend |\Set| + \langle \get, \one_{\Set \setminus A} \rangle + \langle \get, \one_{S \cap A} \rangle \right)\\
&\overset{(f)}{\geq} (1 - \eps) \eta \left( \vend |S| - \vend |\Set| + \vend |\Set \setminus A| + \langle \get, \one_{S \cap A} \rangle \right)\\
&= (1 - \eps) \eta \left( \vend (|S| - |\Set \cap A|) + \langle \get, \one_{S \cap A} \rangle \right)\\
&\geq (1 - \eps) \eta \left( \vend |S \setminus A| + \langle \get, \one_{S \cap A} \rangle \right)\\ 
&= (1 - \eps) \eta \left( \vend |S \setminus A| + \langle \nabla f(\zet), (\one - \zet) \circ \one_{S \cap A} \rangle \right)\\
&\overset{(g)}{\geq} (1 - \eps) \eta \left( \vend |S \setminus A| + \langle \nabla f(\zend), (\one - \zet) \circ \one_{S \cap A} \rangle \right)
\end{align*}
In (a), we used the fact that $\vec{x}' - \vec{x} \geq 0$ and $f$ is concave in non-negative directions.

We can show (b) as follows. We have $\xp \leq \zp = \zet$ and thus $\nabla f(\xp) \geq \nabla f(\zet)$ by DR-submodularity. Additionally, for every coordinate $i \in \Tet$, we have $\nabla_i f(\zet) > 0$. Therefore, for every $i \in \Tet$, we have $(1 - \x_i) \nabla_i f(\xp) \geq (1 - \zet_i) \nabla_i f(\zet) = \get_i > 0$.

We can show (c) as follows. Since $\eta \leq \eta_1$, we have $|S(\eta)| \geq |S(\eta_1)| \geq (1 - \eps) |S|$, where the first inequality is by Lemma~\ref{lem:monotonicity} and the second inequality is by the choice of $\eta_1$. By the definition of $\Set$, we have $\get_i \geq v \geq \vend$ for every $i \in \Set$. By the definition of $\Tet$, we have $\get_i > 0$ for every $i \in \Tet$.

Equality (d) follows from the fact that $S \cap A = T(\eta) \cap A$, which we can show as follows. We have $T(\eta) \subseteq S$, and $S \setminus T(\eta)$ is the set of all coordinates with negative gradient $\get$. Thus it suffices to show that the coordinates in $A$ have positive gradient $\get$. For every $i \in A$, we have $\nabla_i f(\zet) \geq \nabla_i f(\zend) > 0$, where the first inequality is by DR-submodularity (since $\zet \leq \zend$) and the second inequality is by the definition of $A$ and the fact that $(\zend)_i < 1$ for all $i \in [n]$ (Lemma~\ref{lem:z-norms}). Moreover, we have $\zet_i < 1$ for all $i \in [n]$ (Lemma~\ref{lem:z-norms}). Thus $\get_i > 0$ for all $i \in A$, and hence $S \cap A = T(\eta) \cap A$.

In (e), we have used that $\Set \subseteq \Tet$ and $\get$ is non-negative on the coordinates of $\Tet$.

In (f), we have used that, $\get_i \geq v \geq \vend$ for all $i \in \Set$ .

In (g), we have used that $\zet \leq \zend$ and thus $\nabla f(\zet) \geq \nabla f(\zend)$ by DR-submodularity.

Let $\eta_t, S_t, T_t(\eta), \z_t(\eta), \g_t(\eta)$ denote $\eta, S, T(\eta), \zet, \get$ in iteration $t$ of the phase (note that we are momentarily overloading $\eta_1$ and $\eta_2$, and they temporarily stand for the step size $\eta$ in iterations $1$ and $2$, and not for the step sizes on lines~\ref{line:eta1} and \ref{line:eta2}). By summing up the above inequality over all iterations, we obtain:
\begin{align*}
f(\xend) - f(\xstart)
&\geq (1 - \eps) \Bigg( \sum_t \vend \eta_t |S_t \setminus A| + \sum_t \langle \nabla f(\zend), \eta_t (\one - \zet) \circ \one_{S_t \cap A} \rangle \Bigg)\\
&\geq (1 - \eps) \Bigg( \vend \|(\zend - \zstart) \circ \one_{\overline{A}}\|_1 + \langle \nabla f(\zend), (\zend - \zstart) \circ \one_{A} \rangle \Bigg)
\end{align*}
\end{proof}

We will also need the following lemmas.

\begin{lemma}
\label{lem:delta-z-A}
For every $i \in A$, we have:
\[ (\zend)_i - (\zstart)_i \geq (1 - 3\eps) \eps (1 - (\zend)_i) \] 
\end{lemma}
\begin{proof}
Since $S$ was empty at the previous threshold $\vend / (1 - \eps)$, we have $(\zend)_i \geq 1 - (1 - \eps)^j$ or $(\zend)_i - (\zstart)_i \geq \eps (1 - (\zstart)_i)$. If it is the latter, the claim follows, since $1 - (\zstart)_i \geq 1 - (\zend)_i$. Therefore we may assume it is the former. By Lemma~\ref{lem:z-norms}, $(\zstart)_i \leq 1 - (1 - \eps)^{j - 1} + \eps^2$. Therefore 
\[ (\zend)_i - (\zstart)_i \geq \eps (1 - \eps)^{j - 1} - \eps^2 \geq (1 - 3\eps) \eps (1 - \eps)^{j - 1} \geq (1 - 3\eps) \eps (1 - (\zend)_i) \]
where in the second inequality we used that $(1 - \eps)^{j - 1} \geq 1/3$ for sufficiently small $\eps$ (since $(1 - \eps)^{j - 1} \geq (1 - \eps)^{1/\eps} \approx 1/e$).
\end{proof}

\begin{lemma}
\label{lem:grad-z-xopt}
We have:
\[ \langle \nabla f(\zend) \vee \zero, (\one - \zend) \circ \xopt \rangle \geq ((1 - \eps)^j - \eps^2) f(\xopt) - f(\xend) \]
\end{lemma}
\begin{proof}
We have:
\begin{align*}
\langle \nabla f(\zend) \vee \zero, (\one - \zend) \circ \xopt \rangle
&\overset{(a)}{\geq} \langle \nabla f(\zend) \vee \zero, \xopt \vee \zend - \zend \rangle\\
&\overset{(b)}{\geq} f(\zend \vee \xopt) - f(\zend)\\
&\overset{(c)}{\geq} f(\zend \vee \xopt) - f(\xend)\\
&\overset{(d)}{\geq} (1 - \|\zend\|_{\infty}) f(\xopt) - f(\xend)\\
&\overset{(e)}{\geq} ((1 - \eps)^j - \eps^2) f(\xopt) - f(\xend)
\end{align*}
In (a), we used that $(1 - a) b \geq \max\{a, b\} - a$ for all $a, b \in [0, 1]$.

In (b), we used the fact that $f$ is concave in non-negative directions.

In (c), we used the fact that the algorithm maintains the invariant that $f(\x) \geq f(\z)$ via the update on line~\ref{line:fz-larger}.

In (d), we used Lemma~\ref{lem:x-or-opt}.

In (e), we used Lemma~\ref{lem:z-norms}.
\end{proof}

Recall that the phase terminates with either $\vend \leq \eps \vstart$ or $\|\zend\|_1 = \eps jk$. We consider each of these cases in turn.

\begin{lemma}
\label{lem:phase-gain-small-threshold}
Suppose that $\vend \leq \eps \vstart$. We have:
\[ f(\xend) - f(\xstart) \geq (1 - 5\eps) \eps ((1 - \eps)^j f(\xopt) - f(\xend) - 2\eps f(\xopt)) \]
\end{lemma}
\begin{proof}
By Lemma~\ref{lem:phase-gain-common}, we have:
\begin{align*}
&f(\xend) - f(\xstart)\\
&\geq (1 - \eps) \langle \nabla f(\zend), (\zend - \zstart) \circ \one_{A} \rangle \\
&= (1 - \eps) \Big( \underbrace{\langle \nabla f(\zend), (\zend - \zstart - \eps (1 - 3\eps) (\one - \zend) \circ \xopt) \circ \one_{A} \rangle}_{\geq 0}\\
&\quad\qquad\qquad + \langle \nabla f(\zend), \eps (1 - 3\eps) (\one - \zend) \circ \xopt \circ \one_{A} \rangle \Big)\\
&\overset{(a)}{\geq} (1 - 4\eps) \eps  \langle \nabla f(\zend), (\one - \zend) \circ \xopt \circ \one_{A} \rangle\\
&= (1 - 4\eps) \eps  \langle (\one - \zend) \circ \nabla f(\zend), \xopt \circ \one_{A} \rangle\\
&= (1 - 4\eps) \eps \Big( \langle (\one - \zend) \circ \nabla f(\zend) \vee \zero, \xopt \rangle - \langle (\one - \zend) \circ \nabla f(\zend) \vee \zero, \xopt \circ \one_{\overline{A}} \rangle  \Big)\\
&\overset{(b)}{\geq} (1 - 4\eps) \eps \Big( \langle (\one - \zend) \circ \nabla f(\zend) \vee \zero, \xopt \rangle - \frac{\eps \vstart}{1 - \eps} k  \Big)\\
&\geq (1 - 5\eps) \eps \Big( \langle (\one - \zend) \circ \nabla f(\zend) \vee \zero, \xopt \rangle - \eps \vstart k  \Big)\\
&\overset{(c)}{\geq} (1 - 5\eps) \eps ( ((1 - \eps)^j - \eps^2) f(\xopt) - f(\xend) - \eps \vstart k)\\
&\geq (1 - 5\eps) \eps ( ((1 - \eps)^j - \eps^2) f(\xopt) - f(\xend) - \eps f(\xopt))\\
&\geq (1 - 5\eps) \eps ((1 - \eps)^j f(\xopt) - f(\xend) - 2\eps f(\xopt))
\end{align*}
In (a), we used Lemma~\ref{lem:delta-z-A} and the fact that $\zero \leq \xopt \leq \one$ and the fact that $\nabla_i f(\zend) > 0$ for all $i \in A$. 

In (b), we used that $\|\xopt\|_1 \leq k$ and, for all $i \notin A$:
\[ (1 - (\zend)_i) \nabla_i f(\zend) \leq \frac{\vend}{1 - \eps} \leq \frac{\eps \vstart}{1 - \eps} \] 

In (c), we used Lemma~\ref{lem:grad-z-xopt}.

Inequality (d) follows from the definition of $\vstart$ and the fact that $f(\xopt) \geq M$.
\end{proof}

\begin{lemma}
\label{lem:phase-gain-constraint}
Suppose that $\|\zend\|_1 = \eps jk$. We have:
\[ f(\xend) - f(\xstart) \geq \eps (1 - 4\eps)  (((1 - \eps)^j - \eps^2) f(\xopt) - f(\xend)) \]
\end{lemma}
\begin{proof}
By Lemma~\ref{lem:phase-gain-common}, we have:
\begin{align*}
&f(\xend) - f(\xstart)\\
&\geq (1 - \eps) \Bigg( \vend \|(\zend - \zstart) \circ \one_{\overline{A}}\|_1 + \langle \nabla f(\zend), (\zend - \zstart) \circ \one_{A} \rangle \Bigg)\\
&= (1 - \eps) \Bigg( \vend \|(\zend - \zstart) \circ \one_{\overline{A}}\|_1 + \langle \nabla f(\zend), (\zend - \zstart - (1 - 3\eps) \eps (\one - \zend) \circ \xopt) \circ \one_{A} \rangle\\
&\quad\qquad\qquad + \langle \nabla f(\zend), (1 - 3\eps) \eps (\one - \zend) \circ \xopt \circ \one_{A} \rangle \Bigg)\\
&\overset{(a)}{\geq} (1 - \eps) \Bigg( \vend \|(\zend - \zstart) \circ \one_{\overline{A}}\|_1 + \vend \|(\zend - \zstart - (1 - 3\eps)\eps (\one - \zend) \circ \xopt) \circ \one_{A}\|_1\\
&\quad\qquad\qquad + (1 - 3\eps) \eps \langle \nabla f(\zend), (\one - \zend) \circ \xopt \circ \one_{A} \rangle \Bigg)\\
&= (1 - \eps) \Bigg( \vend \underbrace{\|\zend - \zstart\|_1}_{\geq \eps k} - \vend \underbrace{\|(1 - 3\eps) \eps (\one - \zend) \circ \xopt \circ \one_{A}\|_1}_{\leq \eps \|\xopt \circ \one_A\|_1}\\
&\quad\qquad\qquad + (1 - 3\eps) \eps \langle \nabla f(\zend), (\one - \zend) \circ \xopt \circ \one_{A} \rangle  \Bigg)\\
&\overset{(b)}{\geq} (1 - \eps) \Bigg( \vend \eps \|\xopt \circ \one_{\overline{A}}\|_1 + (1 - 3\eps) \eps \langle \nabla f(\zend), (\one - \zend) \circ \xopt \circ \one_{A} \rangle  \Bigg)\\
&\overset{(c)}{\geq} (1 - \eps) \Bigg( (1 - \eps) \eps \langle \nabla f(\zend) \vee \zero, (\one - \zend) \circ \xopt \circ \one_{\overline{A}} \rangle + (1 - 3\eps) \eps \langle \nabla f(\zend), (\one - \zend) \circ \xopt \circ \one_{A} \rangle \Bigg)\\
&\geq (1 - 4\eps) \eps \langle \nabla f(\zend) \vee \zero, (\one - \zend) \circ \xopt \rangle\\
&\overset{(d)}{\geq} (1 - 4\eps) \eps  (((1 - \eps)^j - \eps^2) f(\xopt) - f(\xend))
\end{align*}
In (a), we used Lemma~\ref{lem:delta-z-A} and the fact that $\zero \leq \xopt \leq \one$ and the fact that $\nabla_i f(\zend) \geq \vend$ for all $i \in A$. 

We can show (b) as follows. Recall that we are in the case $\|\zend\|_1 = \eps j k$. Since $\|\zstart \|_1 \leq \eps (j - 1) k$, we have $\|\zend - \zstart\|_1 \geq \eps k$. Additionally, $\|\xopt\|_1 \leq k$.

In (c), we used that, for all $i \notin A$, we have $\vend \geq (1 - \eps) (1 - (\zend)_i) \nabla_i f(\zend)$.

In (d), we used Lemma~\ref{lem:grad-z-xopt}.
\end{proof}
\end{proof}

Using induction and Theorem~\ref{thm:phase-gain}, we can show that the final solution returned by the algorithm is a $1/e - O(\eps)$ approximation. By construction, the final solution satisfies $\|\x\|_1 \leq k$, and thus it also satisfies the constraint.

\begin{theorem}
\label{thm:approx}
Let $\x$ be the final solution returned by Algorithm~\ref{alg:non-monotone}. We have $\|\x\|_1 \leq k$ and $f(\x) \geq \left(\frac{1}{e} - O(\eps) \right) f(\xopt)$.
\end{theorem}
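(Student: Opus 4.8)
The plan is to chain the per-phase progress guarantee of Theorem~\ref{thm:phase-gain} across all $1/\eps$ phases via an elementary linear recurrence, and to dispatch feasibility separately. Assume without loss of generality that $N := 1/\eps$ is a positive integer (otherwise run $\lceil 1/\eps \rceil$ phases and absorb the discrepancy into the $O(\eps)$ term). Write $\mathrm{OPT} := f(\xopt)$ and let $\x^{(j)}$ denote the vector $\x$ at the end of phase $j$, so that $\x^{(0)} = \zero$, $\x^{(N)}$ is the returned solution, and the end of phase $j-1$ is the start of phase $j$. Feasibility is immediate: the algorithm maintains $\x \le \z$ throughout (as observed just before Lemma~\ref{lem:z-norms}), and Lemma~\ref{lem:z-norms} applied in the last phase $j = N$ gives $\|\z\|_1 \le \eps N k = k$, hence $\|\x\|_1 \le \|\z\|_1 \le k$.

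For the approximation ratio, set $\lambda := (1 - 5\eps)\eps$. Rearranging Theorem~\ref{thm:phase-gain} (with $f(\x^{(j-1)}) = f(\xstart)$ and $f(\x^{(j)}) = f(\xend)$ for phase $j$) gives, for every $1 \le j \le N$,
\[ (1 + \lambda)\, f\bigl(\x^{(j)}\bigr) \;\ge\; f\bigl(\x^{(j-1)}\bigr) + \lambda\bigl((1 - \eps)^j - 3\eps\bigr)\mathrm{OPT}. \]
A straightforward induction on $j$ unrolls this into
\[ f\bigl(\x^{(N)}\bigr) \;\ge\; \sum_{j=1}^{N} \frac{\lambda}{(1+\lambda)^{N-j+1}}\bigl((1 - \eps)^j - 3\eps\bigr)\mathrm{OPT} \;+\; \frac{f(\x^{(0)})}{(1+\lambda)^N}, \]
and the last term is nonnegative since $f \ge 0$, so it may be dropped. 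I would then split the sum into its ``$(1-\eps)^j$'' part and its ``$-3\eps$'' part. The $-3\eps$ part is a geometric series that telescopes to exactly $-3\eps\bigl(1 - (1+\lambda)^{-N}\bigr)\mathrm{OPT}$, an additive loss of at most $3\eps\,\mathrm{OPT} = O(\eps)\,\mathrm{OPT}$.

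The substantive part is $\frac{\lambda}{(1+\lambda)^{N+1}} \sum_{j=1}^{N} \mu^j$, where $\mu := (1+\lambda)(1-\eps) = 1 - 6\eps^2 + 5\eps^3 \le 1$. Because $\mu \le 1$, every term of the sum is at least $\mu^N$, so $\sum_{j=1}^N \mu^j \ge N\mu^N \ge N\bigl(1 - N(1 - \mu)\bigr) = N(1 - 6\eps + 5\eps^2) \ge N(1 - 6\eps)$, using Bernoulli's inequality; this is the discrete analogue of the integral $\int_0^1 e^{-t}\,dt$ being bounded below by $N$ times its endpoint value. Substituting and using $\lambda N = 1 - 5\eps$, $1 + \lambda \le 1 + \eps$, and $(1+\lambda)^N \le e^{\lambda N} = e^{1-5\eps} \le e$ (hence $(1+\lambda)^{-N} \ge 1/e$), the substantive part is at least $\frac{(1-5\eps)(1-6\eps)}{e(1+\eps)}\,\mathrm{OPT} = \bigl(\tfrac{1}{e} - O(\eps)\bigr)\mathrm{OPT}$. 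Adding back the $-3\eps\,\mathrm{OPT}$ contribution still leaves $f(\x^{(N)}) \ge \bigl(\tfrac{1}{e} - O(\eps)\bigr)\mathrm{OPT}$, as desired.

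The only real work here is bookkeeping: tracking the cascade of $(1 \pm O(\eps))$ factors and the two geometric sums so that the leading constant comes out to exactly $1/e$ and not something strictly smaller. The two points that must not be fumbled are the inequality $\mu \le 1$ (which is what makes $\sum_j \mu^j \ge N\mu^N$ legitimate and mirrors the ``measured continuous greedy'' ODE $g'(t) \ge e^{-t}\mathrm{OPT} - g(t)$, whose solution $g(t) = t e^{-t}\mathrm{OPT}$ attains $\mathrm{OPT}/e$ at $t = 1$) and the estimate $(1+\lambda)^N \le e$.
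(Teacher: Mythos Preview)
Your proof is correct and follows essentially the same approach as the paper: both dispatch feasibility via $\x \le \z$ and Lemma~\ref{lem:z-norms}, and both chain Theorem~\ref{thm:phase-gain} across the $1/\eps$ phases through the linear recurrence $(1+\lambda)f(\x^{(j)}) \ge f(\x^{(j-1)}) + \lambda((1-\eps)^j - 3\eps)\,\mathrm{OPT}$. The only mechanical difference is that the paper first loosens the recurrence to $f(\x^{(j)}) \ge (1-\eps)f(\x^{(j-1)}) + \eps(1-\eps)^j\,\mathrm{OPT} - 9\eps^2\,\mathrm{OPT}$ and then verifies the closed-form invariant $f(\x^{(j)}) \ge \eps j (1-\eps)^j\,\mathrm{OPT} - 9j\eps^2\,\mathrm{OPT}$ by induction, whereas you keep the recurrence exact, unroll it into a geometric sum, and bound that sum directly; both routes arrive at $(1/e - O(\eps))\,\mathrm{OPT}$.
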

\begin{proof}
By Lemma~\ref{lem:z-norms}, the algorithm maintains the invariant that, at the end of phase $j$, we have $\|\x\|_1 \leq \|\z\|_1 \leq \eps j k$. Thus, at the end of the algorithm, we have $\|\x\|_1 \leq k$.

Next, we show the approximation guarantee. Let $\x^{(0)} = \zero$ and let $\x^{(j)}$ be the solution $\x$ at the end of phase $j$. We will show by induction on $j$ that:
\[ f(\x^{(j)}) \geq \eps j (1 - \eps)^j f(\xopt) - 9j \eps^2 f(\xopt) \]
The above inequality clearly hods for $j = 0$. Consider $j \geq 1$. By Theorem~\ref{thm:phase-gain}, we have
\begin{align*}
f(\x^{(j)}) &\geq f(\x^{(j - 1)}) + (1 - 5\eps) \eps ((1 - \eps)^j f(\xopt) - f(\x^{(j)}) - 3 \eps f(\xopt))\\
\Rightarrow (1 + \eps) f(\x^{(j)}) &\geq f(\x^{(j - 1)}) + (1 - 5\eps) \eps ((1 - \eps)^j f(\xopt) - 3 \eps f(\xopt))\\
\Rightarrow f(\x^{(j)}) &\geq (1 - \eps) f(\x^{(j - 1)}) + (1 - 6\eps) \eps ((1 - \eps)^j f(\xopt) - 3 \eps f(\xopt))\\
&\geq (1 - \eps) f(\x^{(j - 1)}) + \eps (1 - \eps)^j f(\xopt) - 9 \eps^2 f(\xopt)\\
&\overset{(a)}{\geq} (1 - \eps) \Big(\eps (j - 1) (1 - \eps)^{j - 1} f(\xopt) - 9 (j - 1) \eps^2 f(\xopt) \Big) + \eps (1 - \eps)^j f(\xopt) - 9 \eps^2 f(\xopt)\\
&\geq \eps j (1 - \eps)^j f(\xopt) - 9j \eps^2 f(\xopt)
\end{align*}
where (a) is by the inductive hypothesis.

Thus it follows by induction that:
\[ f(\x^{(1/\eps)}) \geq ((1 - \eps)^{1/\eps} - 9\eps) f(\xopt) \geq \left(\frac{1}{e} - O(\eps)\right) f(\xopt),\]
as needed.
\end{proof}

\section{Analysis of the number of iterations}
\label{sec:iters}

Recall that we refer to each iteration of the outer for loop as a \emph{phase}. We refer to each iteration of the inner while loop as an iteration. 

\begin{theorem}
\label{thm:iters}
The total number of iterations of the algorithm is $O(\log(n)\log(1/\eps)/\eps^3)$.
\end{theorem}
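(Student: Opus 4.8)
The plan is to bound the number of iterations within a single phase, and then multiply by the number of phases, which is $1/\eps$. Within a phase, iterations come in two flavors: those where $S = \emptyset$, which decrease the threshold $v$ by a factor of $1 - \eps$ (line~\ref{line:update-v}), and those where $S \neq \emptyset$, which perform a genuine update to $\x$ and $\z$ (lines~\ref{line:update-x}--\ref{line:fz-larger}). The first type is easy to count: the threshold starts at $\vstart$ and the while loop runs only while $v > \eps\vstart$, so there can be at most $\log_{1/(1-\eps)}(1/\eps) = O(\log(1/\eps)/\eps)$ threshold-decrease iterations per phase.

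The harder part is bounding the number of "productive" iterations (those with $S \neq \emptyset$) at a \emph{fixed} threshold value $v$. Here I would argue as follows. In such an iteration the step size is $\eta = \min\{\eta_1, \eta_2\}$. If $\eta = \eta_2$ then either $\eta_2 = \eps^2$ or the $\ell_1$ constraint $\|\z\|_1 \le \eps jk$ becomes tight, terminating the while loop; the latter happens at most once, and I would separately bound the number of times $\eta_2 = \eps^2 < \eta_1$ occurs. If instead $\eta = \eta_1 < \eta_2 \le \eps^2$, then by the definition of $\eta_1$ as the \emph{maximum} $\eta$ with $|\Set| \ge (1-\eps)|S|$, pushing just slightly past $\eta_1$ drops $|\Set|$ below $(1-\eps)|S|$; combined with the monotonicity of $|\Set|$ in $\eta$ (Lemma~\ref{lem:monotonicity}) and the fact that after the update the new active set is contained in $S(\eta_1)$, this means a constant fraction (roughly an $\eps$-fraction) of the coordinates of $S$ leave the active set at threshold $v$ and never return at that threshold (their gradient has dropped below $v$, and gradients only decrease). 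Since coordinates can be removed from consideration at threshold $v$ at most $n$ times in total, this bounds the number of such iterations at a fixed $v$ by $O(\log n / \eps)$ — one should phrase it as: the quantity $|S|$ at threshold $v$ shrinks geometrically, or more carefully, a potential counting "coordinates still above threshold $v$" decreases by a $(1-\eps)$ factor (or by an additive amount) each productive iteration, giving $O(\log n/\eps)$ iterations. The $\eta = \eps^2$ case is handled by a separate potential: each such iteration raises $\|\z\|_1$ by at least $\eps^2 \sum_{i\in S}(1-\z_i)$, and $\sum_{i\in S}(1-\z_i) \ge (1-e^{-1})\cdot$ something bounded below, so at most $O(1/\eps)$ such iterations before the $\|\z\|_1 \le \eps jk$ cap is reached — but one must be careful, since $k$ can be large; the right bookkeeping is that within one phase $\|\z\|_1$ increases by at most $\eps k$ and each full step of size $\eps^2$ on a set with $\sum_{i \in S}(1 - \z_i)$ not too small contributes a definite amount, and the coordinates with small $1 - \z_i$ are excluded by the third condition in the definition of $S$ on line~\ref{line:S}.

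Putting it together: per threshold value $v$, $O(\log n/\eps)$ productive iterations plus $O(1/\eps)$ "full-step" iterations; there are $O(\log(1/\eps)/\eps)$ distinct threshold values per phase; hence $O(\log n \log(1/\eps)/\eps^2)$ iterations per phase; and $1/\eps$ phases give the claimed $O(\log n \log(1/\eps)/\eps^3)$ total.

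The main obstacle I anticipate is making the "fixed threshold" argument fully rigorous: one must show that once a coordinate $i$ leaves $\Set$ at some threshold $v$ (because its weighted gradient $(1-\z_i)\nabla_i f(\z)$ dropped below $v$), it can never re-enter $S$ at that same threshold or any lower one, which uses DR-submodularity (gradients decrease as $\z$ increases) together with the fact that $\|\one - \z\|$ only decreases. The subtlety is that $\z_i$ increasing pushes $(1-\z_i)$ down, reinforcing the drop, but one has to confirm the bookkeeping across the boundary between the iteration where $\eta = \eta_1$ is taken and the next iteration — i.e. that the coordinates counted as "lost" in the monotonicity argument are genuinely lost and not double-counted. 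I would isolate this as a lemma stating that the number of productive iterations at a fixed threshold $v$ within a phase is $O(\log n / \eps)$, proved via a potential equal to $\log |\{i : (1-\z_i)\nabla_i f(\z) \ge v,\ \z_i \le 1-(1-\eps)^j,\ \z_i - (\zstart)_i < \eps(1-(\zstart)_i)\}|$ or its cardinality directly, and check it drops by a $(1-\eps)$-factor each iteration unless $\eta_2$ kicks in and ends the phase.
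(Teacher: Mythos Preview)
Your overall decomposition matches the paper's exactly: $1/\eps$ phases, $O(\log(1/\eps)/\eps)$ threshold values per phase, and at each fixed threshold a split into ``full-step'' iterations ($\eta=\eps^2$) and ``smaller-step'' iterations ($\eta<\eps^2$). Your smaller-step argument --- after such a step the active set drops below $(1-\eps)|S|$ and never recovers at that threshold, so there are $O(\log n/\eps)$ of them --- is the paper's argument verbatim.

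The gap is in your bound on full-step iterations. Using $\|\z\|_1$ as the potential does not work: a full step increases $\|\z\|_1$ by $\eps^2\sum_{i\in S}(1-\z_i)=\Theta(\eps^2|S|)$, while the phase-wide budget for $\|\z\|_1$ is only $\eps k$. Since $|S|$ can be as small as $1$ while $k$ can be $\Theta(n)$, this yields only an $O(k/\eps)$ bound, not $O(1/\eps)$. You flag the danger yourself (``one must be careful, since $k$ can be large''), but the patch you propose is still an $\ell_1$ argument and does not close the gap; the third condition on line~\ref{line:S} that you invoke is a \emph{per-coordinate} constraint, not an $\ell_1$ constraint.

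The paper's fix is to abandon $\|\z\|_1$ and track a single coordinate instead. At a fixed threshold the set $S$ is monotone non-increasing (all three membership conditions only tighten as $\z$ grows), so any coordinate $i$ present in the \emph{last} full-step's $S$ was present in every earlier $S$ at that threshold. Each full step therefore raises $\z_i$ by $\eps^2(1-\z_i)\geq\eps^2(1-\eps)^j=\Theta(\eps^2)$, while the third membership condition on line~\ref{line:S} forces $\z_i-(\zstart)_i<\eps$ throughout. This caps the number of full steps at $O(1/\eps)$, and the rest of your tally goes through.
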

\begin{proof}
There are $O(1/\eps)$ phases. In each phase, there are $O(\log(1/\eps)/\eps)$ different thresholds $v$: the initial threshold is $\vstart$, the threshold right before the final one is at least $\eps \vstart$, and each update on line~\ref{line:update-v} decreases the threshold by a $(1 - \eps)$ factor. Thus it only remains to bound the number of iterations with the same threshold.

In the following, we fix a single threshold and we consider only the iterations of the phase at that threshold. Over these iterations, $\z$ is non-decreasing in every coordinate, $\g$ is non-increasing in every coordinate by DR-submodularity and $\one - \z \geq \zero$, and the set $S$ can only lose coordinates and thus $|S|$ is non-increasing. Additionally, for each coordinate $i \in [n]$, the increase $(\zend)_i - (\zstart)_i$ over the entire phase is at most $\eps + \eps^2$: the increase in each iteration is $\eta$ if $i \in S$ and $0$ otherwise; since $\eta \leq \eps^2$ and $\z_i - (\zstart)_i < \eps (1 - (\zstart)_i)$ for every $i \in S$, the claim follows.  

We say that an iteration is a \emph{large-step iteration} if $\eta = \eps^2$ and it is a \emph{smaller-step iteration} if $\eta < \eps^2$. 

We first consider the large-step iterations. Let $t$ be the last large-step iteration, and let $S_t$ be the set $S$ in that iteration. Let $i \in S_t$. Note that $i \in S_{t'}$ for all iterations $t' \leq t$, since $S_{t} \subseteq S_{t'}$. Thus every large-step iteration increases $\z_i$ by $\eps^2 (1 - \z_i) \geq \eps^2 (1 - \eps)^j \geq \eps^2 (1 - \eps)^{1/\eps} = \Theta(\eps^2)$. Since $\z_i$ increases by at most $\eps + \eps^2$ over the entire phase, it follows that the number of large-step iterations is $O(1/\eps)$. 

Next, we consider the smaller-step iterations. Note that, in every smaller-step iteration except possibly the last one, we have $|S(\eta)| \leq (1 - \eps) |S|$ (if $\eta = \eta_2 < \eps^2$, at the end of the iteration we have $\|\z\|_1 = \eps jk$ and thus the phase ends; if $\eta = \eta_1 < \eps^2$, our choice of $\eta_1$ ensures that $|S(\eta_1)| \leq (1 - \eps) |S|$). Thus every smaller-step iteration decreases $|S|$ by at least an $(1 - \eps)$ factor. Now note that $|S| \leq n$ in the first iteration, $|S| \geq 1$ in the last iteration, and $|S|$ can only decrease with each iteration. Thus the number of smaller-step iterations is $O(\log{n}/\eps)$.

In summary, there are $O(1/\eps)$ phases, $O(\log(1/\eps)/\eps)$ different thresholds per phase, and $O(\log{n}/\eps)$ iterations per threshold. Thus the total number of iterations of the algorithm is $O(\log{n} \log(1/\eps)/\eps^3)$. 
\end{proof}

\section{Experimental Results}

\begin{figure*}[t]
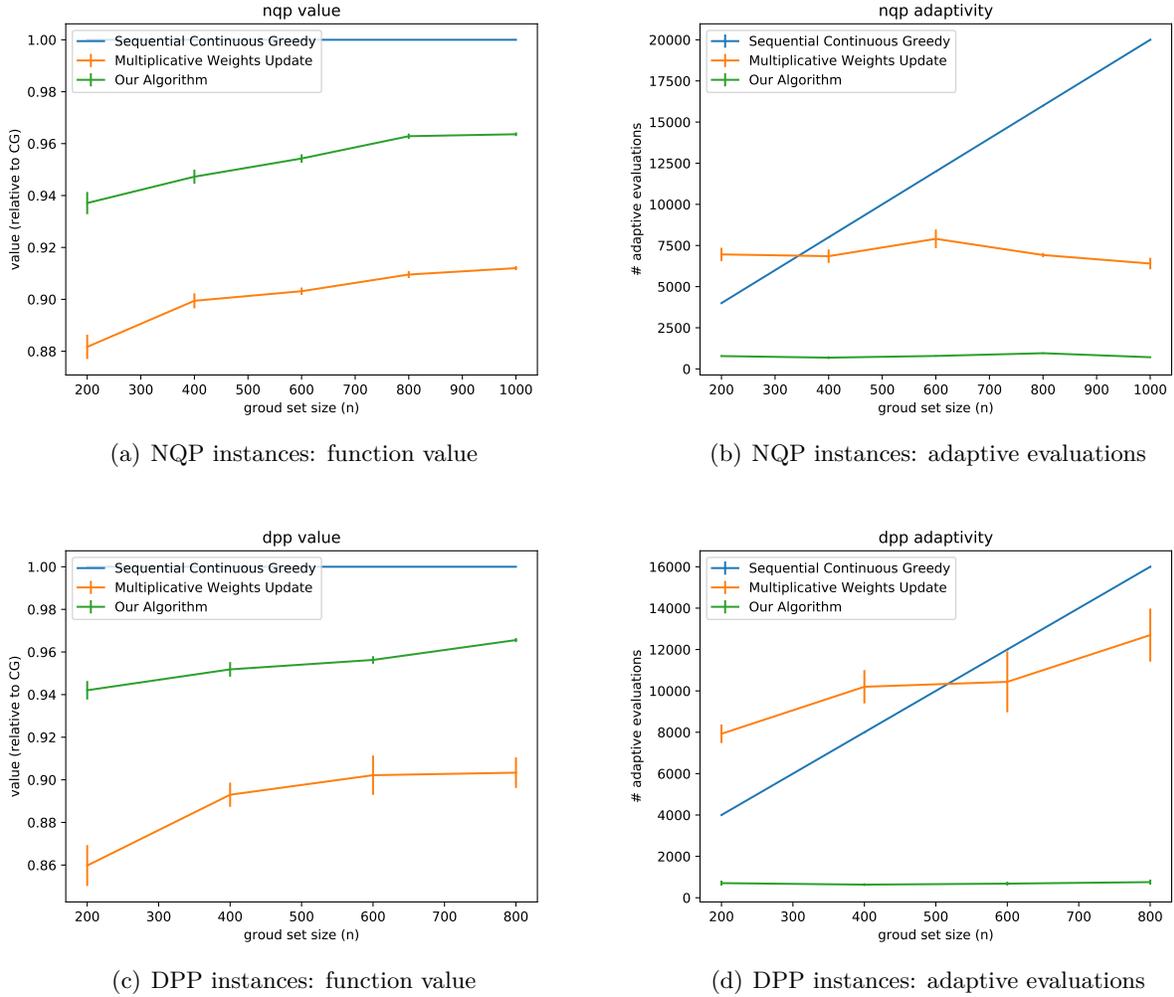

\subfigure[][NQP instances: function value]{
\begin{minipage}[t]{0.49\textwidth}
	\centering
	\includegraphics[width=\textwidth]{nqp-value-plot.pdf}
\end{minipage}
}
\subfigure[][NQP instances: adaptive evaluations]{
\begin{minipage}[t]{0.49\textwidth}
	\centering
	\includegraphics[width=\textwidth]{nqp-iters-plot.pdf}
\end{minipage}
}
\subfigure[][DPP instances: function value]{
\begin{minipage}[t]{0.49\textwidth}
	\centering
	\includegraphics[width=\textwidth]{dpp-value-plot.pdf}
\end{minipage}
}
\subfigure[][DPP instances: adaptive evaluations]{
\begin{minipage}[t]{0.49\textwidth}
	\centering
	\includegraphics[width=\textwidth]{dpp-iters-plot.pdf}
\end{minipage}
}
\caption{Experimental results.}
\label{fig:experiments}
\end{figure*}

We experimentally evaluate our parallel algorithm on instances of non-concave quadratic programming (NQP) and softmax extension of determinantal point processes (DPP). We randomly generated NQP and DPP functions using the following approach that is similar to previous work~\cite{bian2017continuous}.

\emph{NQP instances} are functions of the form $f(\x) = \frac{1}{2} \x^\top H \x + \vec{h}^\top \x$, where $H \in \mathrm{R}^{n \times n}$ is a matrix with non-positive entries, $\vec{h} \in \R^n$. We randomly generated such instances as follows: we sampled each entry of $H$ uniformly at random from $[-10, 0]$, and we set $\vec{h} = - 0.2 H^\top \vec{1}$.

\emph{DPP instances} are functions of the form $f(\x) = \log\det(\mathrm{diag}(\x) (L - I) + I)$, where $L \in \R^{n \times n}$ is a psd matrix and $I$ is the identity matrix. We randomly generated such instances as follows. We sampled the eigenvalues of $L$ as follows: the $i$-th eigenvalue is $\ell_i = e^{r_i}$, where $r_i$ was sampled uniformly from $[-0.5, 1]$. We sampled a random orthogonal matrix $V$. We set $L = V \mathrm{diag}(\ell_1, \dots, \ell_n) V^\top$. 

\emph{Algorithms, implementation details, and parameter choices.}
We empirically compared our parallel algorithm with the state of the art sequential and parallel algorithms, which we describe in more detail in Section~\ref{app:algorithms} of the appendix. The sequential algorithm that we used in our experiments is a variant of the measured continuous greedy algorithm that was studied in the works~\cite{ChekuriJV15,bian2017continuous}; this algorithm outperforms the standard measured continuous greedy algorithm in terms of solution quality since it fills up more of the available budget, while performing the same number of iterations. We implemented the sequential continuous greedy algorithm using a step size of $\eps / n$, leading to $O(n/\eps)$ iterations and adaptive evaluations\footnote{The theoretical guarantee of $1/e - \eps$ for the measured continuous greedy and related algorithms is obtained with the more conservative step size of $\eps / n^3$, and thus $O(n^3/\eps)$ iterations, but such a high number of iterations was prohibitive in our experiments.}. The state of the art parallel algorithm for non-mononotone DR-submodular maximization with a cardinality constraint is the multiplicative weights update algorithm of \cite{ENV19} that achieves a $1/e - \eps$ approximation using $O(\log^2{n}/\eps^2)$ iterations. We implemented our algorithm with a more aggressive update of the thresholds on line~\ref{line:update-v}: instead of the update $v \gets (1 - \eps) v$, we performed the update $v \gets 0.75 \cdot v$, i.e., the threshold is updated by a constant factor independent of $\eps$ instead of $1-\eps$. Thus the algorithm only performs $O(\log{n} \log(1/\eps)/ \eps^2)$ iterations. We used error $\eps = 0.05$ and budget $k = 10$ in all of the experiments.

\emph{Computing infrastructure.} We implemented the algorithms in C++ and ran the experiments on an iMac with a 3.3 GHz Intel Core i5 processor and 8 GB of memory.

\emph{Results.} The experimental results are shown in Figure~\ref{fig:experiments}. Each value is the average value for $5$ independently sampled instances and the error bar is $\pm 1$ standard deviation. The sequential continuous greedy algorithm achieved the highest solution value in all of the runs, and we report the value obtained by the parallel algorithms as the fraction of the continuous greedy solution value. In all of the runs, our parallel algorithm achieves higher function value than the parallel multiplicative weights update algorithm, while the number of evaluations is significantly lower.

\clearpage

\bibliographystyle{abbrv}
\bibliography{../submodular}

\newpage
\appendix

\section{Omitted proofs}
\label{app:omitted}

\begin{lemma}
\label{lem:xz}
The algorithm maintains the invariant that $\vec{0} \leq \vec{x} \leq \vec{z}$. 
\end{lemma}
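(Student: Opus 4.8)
The plan is to prove the invariant by induction on the number of vector updates performed by the algorithm, that is, on the number of executions of lines~\ref{line:update-x}--\ref{line:fz-larger} (the branch where $S \neq \emptyset$). It is convenient to carry the slightly stronger hypothesis $\zero \le \x \le \z \le \one$, because the upper bound $\z \le \one$ is needed to handle coordinates lying in $S$ but not in $\Tet$. The base case is immediate: after the initialization $\x \gets \zero$, $\z \gets \zero$ we have $\x = \z = \zero$, so $\zero \le \x \le \z \le \one$. The only lines that modify $\x$ or $\z$ are lines~\ref{line:update-x}--\ref{line:fz-larger}; the threshold update on line~\ref{line:update-v} and the phase bookkeeping $\xstart \gets \x$, $\zstart \gets \z$ leave $\x$ and $\z$ unchanged, so the invariant is trivially preserved there.

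For the inductive step, suppose $\zero \le \x \le \z \le \one$ holds before an execution of lines~\ref{line:update-x}--\ref{line:fz-larger}, with chosen step size $\eta \in [0,\eps^2] \subseteq [0,1]$. Let $\xp$ and $\zp$ denote the vectors right after lines~\ref{line:update-x} and~\ref{line:update-z}. For $i \in \Tet$ we have $\xp_i = \x_i + \eta(1 - \x_i) = (1-\eta)\x_i + \eta \in [0,1]$ since $\x_i \in [0,1]$ and $\eta \in [0,1]$, while for $i \notin \Tet$ we have $\xp_i = \x_i \in [0,1]$; the same computation with $S$ in place of $\Tet$ gives $\zp \in [0,1]^n$. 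To check $\xp \le \zp$ coordinatewise, recall that $\Tet \subseteq S$ (noted right after the algorithm). Fix $i \in [n]$: if $i \notin S$ then also $i \notin \Tet$, so $\xp_i = \x_i \le \z_i = \zp_i$; if $i \in S \setminus \Tet$ then $\xp_i = \x_i \le \z_i \le (1-\eta)\z_i + \eta = \zp_i$, using $\x_i \le \z_i$ and $\z_i \le 1$; if $i \in \Tet$ then $\xp_i = (1-\eta)\x_i + \eta \le (1-\eta)\z_i + \eta = \zp_i$ since $\x_i \le \z_i$ and $1-\eta \ge 0$. Hence $\xp \le \zp$, and $\zero \le \xp \le \zp \le \one$. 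Finally, line~\ref{line:fz-larger} either leaves $\x = \xp$ or replaces it by $\zp$; in the latter case the resulting pair $(\zp,\zp)$ again satisfies $\zero \le \zp \le \zp \le \one$. This closes the induction.

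Every step of this argument is an elementary monotonicity computation, so I do not anticipate any genuine obstacle. The only points requiring a little care are to track the auxiliary bound $\z \le \one$ (so that the case $i \in S \setminus \Tet$ goes through) and to use the inclusion $\Tet \subseteq S$, which guarantees that any coordinate increased in $\x$ on line~\ref{line:update-x} is also increased in $\z$ on line~\ref{line:update-z}.
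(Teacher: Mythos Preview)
Your proof is correct and follows essentially the same approach as the paper: induction on the number of updates, using $T(\eta) \subseteq S$ and the explicit form $\eta + (1-\eta)\x_i$, $\eta + (1-\eta)\z_i$ of the updated coordinates. The only minor difference is that you carry the auxiliary bound $\z \le \one$ through the induction to handle the case $i \in S \setminus \Tet$, whereas the paper leaves this implicit (for $i \in S$ one has $\z_i \le 1 - (1-\eps)^j < 1$ directly from the definition of $S$, so the extra hypothesis is not strictly needed).
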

\begin{proof}
We show the lemma by induction on the number of updates (lines~\ref{line:update-x} and \ref{line:update-z}). Consider an iteration of the inner while loop. If the algorithm executes line~\ref{line:fz-larger}, we have $\vec{x} = \vec{z}$ at the end of the iteration. Therefore we may assume that the algorithm does not execute line~\ref{line:fz-larger}. Let $\vec{x}'$ and $\vec{z}'$ be the updated vectors after performing the updates on line~\ref{line:update-x} and line~\ref{line:update-z}, respectively. Let $\vec{x}$ and $\vec{z}$ denote the vectors right before the update. By the induction hypothesis, we have $\vec{0} \leq \vec{x} \leq \vec{z}$.

For each coordinate $i \in [n]$, we have:
\begin{align*}
  \vec{x}_i' &=
  \begin{cases}
    \vec{x}_i + \eta (1 - \vec{x}_i) = \eta + (1 - \eta) \vec{x}_i &\text{if } i \in T(\eta)\\
    \vec{x}_i &\text{otherwise}
  \end{cases}\\
  \vec{z}_i' &=
  \begin{cases}
    \vec{z}_i + \eta (1 - \vec{z}_i) = \eta + (1 - \eta) \vec{z}_i &\text{if } i \in S\\
    \vec{z}_i &\text{otherwise}
  \end{cases}
\end{align*}
Since $\vec{0} \leq \vec{x} \leq \vec{z}$, $1 - \eta \geq 0$, and $T(\eta) \subseteq S$, we have $\vec{0} \leq \vec{x}' \leq \vec{z}'$, as needed.
\end{proof}

\begin{lemma}
\label{lem:nonneg-grad-seta}
Consider an iteration of the inner while loop. Let $\vec{x}$ and $\vec{z}$ be the respective vectors before the updates on lines~\ref{line:update-x}--\ref{line:fz-larger}, and let $\vec{x}'$ and $\vec{z}' = \vec{z}(\eta)$ be the respective vectors after the updates. For each coordinate $i \in S(\eta)$, we have $\nabla_i f(\vec{x}') \geq \nabla_i f(\vec{z}') \geq \frac{v}{1 - \vec{z}'_i} > 0$.
\end{lemma}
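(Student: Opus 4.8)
The statement concerns an iteration in which $S \neq \emptyset$, so that $\eta$ is defined and the updates on lines~\ref{line:update-x}--\ref{line:fz-larger} are executed; I will take this for granted throughout. The plan is to prove the three inequalities from left to right. First I would establish that $\xp \le \zp = \zet$ coordinate‑wise. This is exactly the computation in the proof of Lemma~\ref{lem:xz}: if the update on line~\ref{line:fz-larger} fires then $\xp = \zp$ and there is nothing to check; otherwise, for each coordinate $i$ we have $\xp_i = \x_i + \eta(1-\x_i)$ if $i \in \Tet$ and $\xp_i = \x_i$ otherwise, while $\zp_i = \z_i + \eta(1-\z_i)$ if $i \in S$ and $\zp_i = \z_i$ otherwise. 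Since $\zero \le \x \le \z$ by Lemma~\ref{lem:xz}, since $\eta \le \eps^2 < 1$, and since $\Tet \subseteq S$, a short case analysis on whether $i$ lies in $\Tet$, in $S \setminus \Tet$, or outside $S$ gives $\xp_i \le \zp_i$ in every case.

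Second, because $f$ is differentiable and DR‑submodular, $\xp \le \zp$ implies $\nabla f(\xp) \ge \nabla f(\zp)$ coordinate‑wise; restricting to the coordinate $i \in \Set$ gives $\nabla_i f(\xp) \ge \nabla_i f(\zp) = \nabla_i f(\zet)$, which is the first claimed inequality. For the remaining inequalities, fix $i \in \Set$. By the definitions on line~\ref{line:defs} we have $\Set \subseteq S$, so the membership condition on line~\ref{line:S} gives $\z_i \le 1 - (1-\eps)^j$; hence $\zp_i = (1-\eta)\z_i + \eta \le 1 - (1-\eta)(1-\eps)^j$, and since $\eta \le \eps^2 < 1$ we get $1 - \zp_i \ge (1-\eta)(1-\eps)^j > 0$ (alternatively one may simply invoke Lemma~\ref{lem:z-norms} to get $\zp_i < 1$). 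Again by the definition of $\Set$, $\get_i \ge v$, i.e. $(1 - \zp_i)\,\nabla_i f(\zp) = \get_i \ge v$; dividing by $1 - \zp_i > 0$ yields $\nabla_i f(\zp) \ge v/(1-\zp_i)$. Finally I would argue $v > 0$: the iteration in question is an iteration of the inner while loop, so its guard $v > \eps\vstart$ held on entry; and $\vstart > 0$, since otherwise at the very first iteration $v = \vstart \le 0$ would force $(1-\eps)\vstart > 0$, a contradiction, so the loop would never run. Hence $v > \eps\vstart > 0$, and combining the above gives $\nabla_i f(\xp) \ge \nabla_i f(\zet) \ge v/(1-\zp_i) > 0$.

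There is no substantial obstacle here — the lemma is a bookkeeping statement built from DR‑submodularity plus the defining inequalities of $S$ and $\Set$. The only points that require a moment of care are confirming $1 - \zp_i > 0$ (handled via line~\ref{line:S} together with $\eta < 1$, or via Lemma~\ref{lem:z-norms}) and confirming $v > 0$ (handled via the loop guard and the sign of $\vstart$); everything else is immediate from $\xp \le \zp = \zet$ and DR‑submodularity.
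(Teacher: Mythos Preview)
Your proposal is correct and follows essentially the same approach as the paper's proof: invoke $\xp \le \zp$ (the paper cites Lemma~\ref{lem:xz} directly, while you re-derive it) together with DR-submodularity for the first inequality, then unpack the definition of $\Set$ to get $(1-\zp_i)\nabla_i f(\zp) \ge v$ and divide through. Your treatment is slightly more careful than the paper's in justifying $1-\zp_i>0$ and $v>0$, but the argument is the same.
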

\begin{proof}
Note that the update rule on line~\ref{line:update-z} sets $\vec{z}' = \vec{z}(\eta)$. Since $\vec{x}' \leq \vec{z}'$ (Lemma~\ref{lem:xz}), DR-submodularity implies that $\nabla f(\vec{x}') \geq \nabla f(\vec{z}')$. Let $i \in S(\eta)$. By the definition of $S(\eta)$, we have $\vec{z}_i(\eta) \leq 1 - (1 - \eps)^j < 1$ and $(1 - \vec{z}_i(\eta)) \nabla_i f(\vec{z}(\eta)) \geq v > 0$, which implies that $\nabla_i f(\vec{z}(\eta)) \geq \frac{v}{1 - \vec{z}_i(\eta)} > 0$.
\end{proof}

\begin{lemma}
\label{lem:monotonicity}
Consider the vectors and sets defined on line~\ref{line:defs}. For all $\eta$ and $\eta'$ such that $0 \leq \eta \leq \eta' \leq \eps$, we have:
\begin{itemize}
\item[(1)] $\vec{z}(\eta) \leq \vec{z}(\eta')$,
\item[(2)] $S(\eta) \supseteq S(\eta')$.
\item[(3)] $T(\eta) \supseteq T(\eta')$.
\end{itemize}
\end{lemma}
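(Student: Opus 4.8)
The plan is to prove (1) by a direct coordinate-wise computation and then obtain (2) and (3) from a single monotonicity property of the reweighted gradient $\get$ on the coordinates of $S$. For (1): if $i \notin S$ then $\z_i(\eta) = \z_i = \z_i(\eta')$; if $i \in S$, the definition of $S$ on line~\ref{line:S} gives $\z_i \le 1 - (1-\eps)^j < 1$, so $1 - \z_i > 0$ and $\z_i(\eta) = \z_i + \eta(1-\z_i) \le \z_i + \eta'(1-\z_i) = \z_i(\eta')$ since $0 \le \eta \le \eta'$. The same identity together with $\eta' \le \eps < 1$ also shows $\z_i(\eta') < 1$ for every $i \in S$, so the reweighting factors stay strictly positive: $1 - \z_i(\eta) \ge 1 - \z_i(\eta') > 0$. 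I will use this repeatedly below.

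The \emph{key step} is the following claim: for every $i \in S$ with $\nabla_i f(\z(\eta')) > 0$, we have $\get_i \ge (1 - \z_i(\eta'))\,\nabla_i f(\z(\eta'))$. Indeed, by (1) and DR-submodularity, $\nabla_i f(\z(\eta)) \ge \nabla_i f(\z(\eta'))$; thus the ordered pair of gradients $\nabla_i f(\z(\eta)) \ge \nabla_i f(\z(\eta')) > 0$ and the ordered pair of reweighting factors $1 - \z_i(\eta) \ge 1 - \z_i(\eta') > 0$ both consist of nonnegative numbers, and multiplying them termwise preserves the ordering. This gives exactly $\get_i = (1 - \z_i(\eta))\,\nabla_i f(\z(\eta)) \ge (1 - \z_i(\eta'))\,\nabla_i f(\z(\eta'))$.

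Finally, (2) and (3) follow. On every iteration that reaches line~\ref{line:defs} we have $v > 0$: the inner while loop is entered only if $\vstart > 0$ (its guard $v > \eps\vstart$, first checked with $v = \vstart$, forces this), and thereafter $v$ remains positive. Recall from line~\ref{line:defs} that $S(\eta') = \{i \in S \colon (1 - \z_i(\eta'))\,\nabla_i f(\z(\eta')) \ge v\}$ and $T(\eta') = \{i \in S \colon (1 - \z_i(\eta'))\,\nabla_i f(\z(\eta')) > 0\}$. Hence, if $i \in S(\eta')$, then $(1 - \z_i(\eta'))\,\nabla_i f(\z(\eta')) \ge v > 0$, so $\nabla_i f(\z(\eta')) > 0$ (since $1 - \z_i(\eta') > 0$), and the key step yields $\get_i \ge (1 - \z_i(\eta'))\,\nabla_i f(\z(\eta')) \ge v$, i.e.\ $i \in S(\eta)$. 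Likewise, if $i \in T(\eta')$ then $(1 - \z_i(\eta'))\,\nabla_i f(\z(\eta')) > 0$, hence $\nabla_i f(\z(\eta')) > 0$, and the key step gives $\get_i > 0$, i.e.\ $i \in T(\eta)$. The argument is essentially bookkeeping; the only point that needs care is the direction of the termwise product inequality in the key step, which is why I first secure that $v > 0$ and that the reweighting factors are strictly positive — without nonnegativity of $\nabla_i f(\z(\eta'))$ and of the factors $1 - \z_i(\cdot)$, the product of two ordered pairs need not be ordered.
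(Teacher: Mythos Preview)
Your proof is correct and follows essentially the same approach as the paper: prove (1) coordinate-wise, then use DR-submodularity together with the fact that both factors $1 - \z_i(\cdot)$ and $\nabla_i f(\z(\cdot))$ are nonnegative and monotone in $\eta$ to deduce that $\g_i(\eta)$ dominates $\g_i(\eta')$ on the relevant coordinates. The only cosmetic differences are that the paper writes $1 - \z_i(\eta) = (1-\eta)(1-\z_i)$ explicitly before comparing, and that you package the common product-inequality as a single ``key step'' and are more explicit about why $v>0$; neither changes the substance.
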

\begin{proof}
\begin{itemize}
\item[(1)] For every $i \notin S$, we have $\vec{z}_i(\eta) = \vec{z}_i(\eta') = \vec{z}_i$. For every $i \in S$, we have:
\[ \vec{z}_i(\eta) \overset{(a)}{=} \vec{z}_i + \eta (1 - \vec{z}_i) \overset{(b)}{\leq} \vec{z}_i + \eta' (1 - \vec{z}_i) \overset{(c)}{=} \vec{z}_i(\eta')\]
where (a) and (c) are due to $i \in S$, (b) is due to $\eta \leq \eta'$ and $1 -\vec{z}_i \geq 0$ (since $i \in S$, $\vec{z}_i \leq 1 - (1 - \eps)^j \leq 1$).
\item[(2)] Let $i \in S(\eta')$. By (1) and DR-submodularity, we have $\nabla f(\vec{z}(\eta)) \geq \nabla f(\vec{z}(\eta'))$. Since $i \in S(\eta')$, we have $1 - z_i(\eta') \geq 0$ (since $z_i(\eta') \leq 1 - (1 - \eps)^j \leq 1$), and $\nabla_i f(\vec{z}(\eta')) \geq 0$ (since $1 - z_i(\eta') \geq 0$ and $(1 - z_i(\eta')) \nabla_i f(\vec{z}(\eta')) \geq v > 0$). Therefore
\[ \vec{g}_i(\eta) \overset{(a)}{=} (1 - \eta) (1 - \vec{z}_i) \nabla_i f(\vec{z}(\eta)) \overset{(b)}{\geq} (1 - \eta') (1 - \vec{z}_i) \nabla_i f(\vec{z}(\eta')) \overset{(c)}{=} \vec{g}_i(\eta') \overset{(d)}{\geq} v\]
where (a) and (c) are due to $i \in S$; (b) is due to $\eta \leq \eta'$, $1 - \vec{z}_i \geq 0$, and $\nabla_i f(\vec{z}(\eta)) \geq \nabla_i f(\vec{z}(\eta')) \geq 0$; $(d)$ is due to $i \in S(\eta')$.
\item[(3)] Let $i \in T(\eta')$. By (1) and DR-submodularity, we have $\nabla f(\vec{z}(\eta)) \geq \nabla f(\vec{z}(\eta'))$. Since $i \in T(\eta')$, we have $1 - z_i(\eta') > 0$ and $\nabla_i f(\vec{z}(\eta')) > 0$. Therefore
\[ \vec{g}_i(\eta) \overset{(a)}{=} (1 - \eta) (1 - \vec{z}_i) \nabla_i f(\vec{z}(\eta)) \overset{(b)}{\geq} (1 - \eta') (1 - \vec{z}_i) \nabla_i f(\vec{z}(\eta')) \overset{(c)}{=} \vec{g}_i(\eta') \overset{(d)}{>} 0\]
where (a) and (c) are due to $i \in S$; (b) is due to $\eta \leq \eta'$, $1 - \vec{z}_i \geq 0$, and $\nabla_i f(\vec{z}(\eta)) \geq \nabla_i f(\vec{z}(\eta')) \geq 0$; $(d)$ is due to $i \in T(\eta')$.
\end{itemize}
\end{proof}

\section{Approximate step sizes}
\label{app:approx-step-size}

\begin{algorithm}
\caption{Algorithm for $\max_{\vec{x} \in [0, 1]^n \colon \|\vec{x}\|_1 \leq k} f(\vec{x})$, where $f$ is a non-negative DR-submodular function.} 
\label{alg:non-monotone}
\begin{algorithmic}[1]
\State $M: f(\xopt) \leq M \leq (1 + \eps) f(\xopt)$
\State $\x \gets \zero$
\State $\z \gets \zero$
\For{$j = 1$ to $1/\eps$}
  \State \LineComment{Start of phase $j$}
  \State $\xstart \gets \x$
  \State $\zstart \gets \z$
  \State $\vstart \gets \frac{1}{k}(((1 - \eps)^j - 2\eps) M - f(\x))$ \label{line:vstart-approx}
  \State $v \gets \vstart$
  \While{$v > \eps \vstart$ and $\|\z\|_1 < \eps jk$}
    \State $\g = (\one - \z) \circ \nabla f(\z)$
    \State $S = \{i \in [n] \colon \g_i \geq v \text{ and } \z_i \leq 1 - (1 - \eps)^j \text{ and } \z_i - (\zstart)_i < \eps (1 - (\zstart)_i)\}$ \label{line:S-approx}
    \If{$S = \emptyset$}
      \State $v \gets (1 - \eps) v$ \label{line:update-v-approx}
    \Else
      \State For a given $\eta \in [0, \eps^2]$, we define:
      \begin{align*}
        \zet &= \z + \eta (\one - \z) \circ \one_S\\
        \get &= (\one - \zet) \circ \nabla f(\zet)\\
        \Set &= \{i \in S \colon \get_i \geq v \}\\
        \Tet &= \{i \in S \colon \get_i > 0 \}
      \end{align*}
      \label{line:defs-approx}
      \State \LineComment{$\delta \leq \eps / N$, where $N$ is the total number of iterations of the algorithm}
      \State Let $\delta = \Theta\left(\frac{\eps^4}{\log(n) \log(1/\eps)} \right)$ \label{line:delta-approx}
      \State \LineComment{Let $\eta^*_1$ be the maximum $\eta \in [0, \eps^2]$ such that $|\Set| \geq (1 - \eps) |S|$} \label{line:eta1-opt}
      \State Using $t$-ary search, find $\eta_1 \in [0, \eps^2]$ such that $\eta^*_1 \leq \eta_1 \leq \eta^*_1 + \delta$
      \State \LineComment{$\eta_2 = \min\left\{\eps^2,  \frac{\eps jk - \|\z\|_1}{|S| - \|\z \circ \one_S\|_1} \right\}$} \label{line:eta1-approx}
      \State Let $\eta_2$ be the maximum $\eta \in [0, \eps^2]$ such that $\|\zet\|_1 \leq \eps jk$ \label{line:eta2-approx}
      \State $\eta \gets \min\{\eta_1, \eta_2\}$ \label{line:eta-approx}
      \State $\x \gets \x + \eta (\one - \x) \circ \one_{T(\eta - \delta)}$ \label{line:update-x-approx}
      \State $\z \gets \z + \eta (\one - \z) \circ \one_S$ \label{line:update-z-approx}
      \If{$f(\z) > f(\x)$}
        \State $\x \gets \z$ \label{line:fz-larger-approx}
      \EndIf
    \EndIf
  \EndWhile
\EndFor
\State \Return $\x$
\end{algorithmic}
\end{algorithm}

In this section, we show how to extend the idealized algorithm (Algorithm~\ref{alg:non-monotone-ideal}) and its analysis. In order to obtain an efficient algorithm, we find the step size $\eta_1$ approximately using $t$-ary search, as described below. The modified algorithm is given in Algorithm~\ref{alg:non-monotone}. On line~\ref{line:delta-approx} of Algorithm~\ref{alg:non-monotone}, the $\Theta$ notation hides a sufficiently small constant so that $\delta \leq \eps / N$, where $N$ is the total number of iterations of the algorithm (as we discuss later in this section, the analysis of the number of iterations given in Theorem~\ref{thm:iters} still holds and thus $N = O(\log(n) \log(1/\eps)/\eps^3)$.) 

{\bf Finding $\eta_1$ on line~\ref{line:eta1-approx}.} As in the description of the algorithm, we let $\eta^*_1$ be the maximum $\eta \in [0, \eps^2]$ such that $|\Set| \geq (1 - \eps) |S|$ and we let $\delta$ be the value on line~\ref{line:delta-approx}. As shown in Lemma~\ref{lem:monotonicity}, for every $\eta \leq \eta'$, we have $S(\eta) \supseteq S(\eta')$, and thus $|S(\eta)|$ is non-increasing as a function of $\eta$. Note that $S(0) = S$ and thus $|S(0)| \geq (1 - \eps) |S|$. We first check whether $|S(\eps^2)| \geq (1 - \eps) |S|$; if so, we have $\eta^*_1 = \eps^2$ and we return $\eta_1 = \eps^2$. Therefore we may assume that $|S(\eps^2)| < (1 - \eps) |S|$ and thus $\eta^*_1 \in [0, \eps^2)$. Starting with the interval $[0, \eps^2]$, we perform $t$-ary search, and we stop once we reach an interval $[a, b]$ of length at most $\delta$. We return $\eta_1 = b$. Note that we have $\eta^*_1 \leq \eta_1 \leq \eta^*_1 + \delta$. 

The arity of the $t$-ary search gives us different trade-offs between the number of parallel rounds and the total running time. The $t$-ary search takes $\log_t(\eps^2 / \delta)$ parallel rounds and $t \log_t (\eps^2 / \delta)$ evaluations of $f$ and $\nabla f$. If we use binary search ($t = 2$), the number of rounds is $\log_2(\eps^2 / \delta) = O(\log\log{n} + \log(1/\eps))$ and the number of evaluations of $f$ and $\nabla f$ is also $O(\log\log{n} + \log(1/\eps))$. If we take $t = \Theta(\log{n}/\eps)$, the number of rounds is $O(1)$ and the number of evaluations of $f$ and $\nabla f$ is $O(\log{n}/\eps)$.  

Next, we show how to extend the analysis given in Sections~\ref{sec:approx} and \ref{sec:iters}. We first note that the upper bound on the total number of iterations given in Theorem~\ref{thm:iters} still holds, since we have $\eta \geq \eta^* := \min\{\eta^*_1, \eta_2\}$ and $T(\eta - \delta) \supseteq T(\eta^*)$. Therefore it only remains to show that the approximate search only introduces an overall $O(\eps)$ additive error in the approximation guarantee. Since $\delta \leq \eps / N$, where $N$ is the total number of iterations of the algorithm, it suffices to show that the error is $O(\delta) f(\xopt)$ in each iteration.

We start with the following lemma:

\begin{lemma}
\label{lem:fn-value-drop}
Let $\alpha, \beta \in \R$ and $\u, \v \in \R^n$. Suppose that $0 \leq \beta \leq \alpha \leq 1$, $\u \leq \v \leq \alpha \one$, and $\v - \u  \leq \beta \one$. Then $f(\u) - f(\v) \leq \frac{\beta}{\alpha} f(\u)$.
\end{lemma}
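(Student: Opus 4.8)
The plan is to reduce the statement to a one-dimensional concavity estimate along the segment from $\u$ to $\v$, using two facts already available: $f$ is concave along non-negative directions (its Hessian has non-positive entries), and $f$ is non-negative on all of $\R^n_+$ after the extension $f(\x) = f(\x \wedge \one)$.

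First I would set $\d := \v - \u \geq \zero$ and consider the ray $p(t) := \u + t\d$ for $t \geq 0$, so $p(0) = \u$ and $p(1) = \v$; if $\v = \u$ the claim is trivial since the left-hand side vanishes, so assume $\d \neq \zero$. The key step is to push the segment past $\v$ as far as it stays in the cube: for a coordinate $i$ with $\v_i > \u_i$ the entry $p(t)_i$ increases and reaches $1$ exactly at $t = (1-\u_i)/(\v_i - \u_i)$, so if I let $t^* := \min_{i \,:\, \v_i > \u_i} (1-\u_i)/(\v_i - \u_i)$ then the whole segment $\{p(t) : 0 \le t \le t^*\}$ lies in $[0,1]^n$. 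Moreover $t^* \geq 1$, because $\v_i \leq \alpha \leq 1$ gives $1 - \u_i = (1-\v_i) + (\v_i - \u_i) \geq \v_i - \u_i$ for every such $i$.

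Since $p([0,t^*]) \subseteq [0,1]^n$ and $\d \geq \zero$, the map $g(t) := f(p(t))$ is concave on $[0, t^*]$. Writing $1 = \tfrac{t^*-1}{t^*}\cdot 0 + \tfrac{1}{t^*}\cdot t^*$ and applying concavity, $f(\v) = g(1) \geq \tfrac{t^*-1}{t^*} f(\u) + \tfrac{1}{t^*} f(p(t^*))$; discarding the last term via $f(p(t^*)) \geq 0$ yields $f(\u) - f(\v) \leq \tfrac{1}{t^*} f(\u)$. Finally I would bound the overshoot $\tfrac{1}{t^*} = \max_{i \,:\, \v_i > \u_i} \tfrac{\v_i - \u_i}{1-\u_i}$: the numerator is at most $\beta$ by the hypothesis $\v - \u \leq \beta\one$, and the denominator $1 - \u_i$ is bounded below using $\u_i \leq \v_i \leq \alpha$; combining these two bounds gives the claimed $\tfrac{\beta}{\alpha} f(\u)$.

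The main obstacle I expect is the bookkeeping around $t^*$ rather than any conceptual difficulty: one has to be careful to exclude the coordinates with $\v_i = \u_i$ from the minimum defining $t^*$ (they constrain nothing), to verify $p(t^*) \in [0,1]^n$ so that both concavity and non-negativity of $f$ apply at that point, and to dispatch the degenerate cases $\v = \u$ and $f(\u) = 0$ separately (both make the inequality immediate). An essentially equivalent, slightly slicker packaging is to exhibit $\v$ directly as the convex combination $\v = (1-s)\,\u + s\,\w$ with $s := 1/t^* \in [0,1]$ and $\w := \u + \tfrac{1}{s}(\v - \u) \in [0,1]^n$, and then invoke concavity of $f$ together with $f(\w) \ge 0$; this avoids parametrizing the ray but uses exactly the same estimate on $s$.
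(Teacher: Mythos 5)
Your construction is essentially the paper's: extend the segment from $\u$ through $\v$ as far as it stays in $[0,1]^n$, write $\v$ as a convex combination of $\u$ and the far endpoint, and use concavity along non-negative directions together with $f \geq 0$ to discard the far term. Everything up to and including $f(\u) - f(\v) \leq \frac{1}{t^*} f(\u)$ is correct (including the check $t^* \geq 1$). The gap is in the very last step. From $\v_i - \u_i \leq \beta$ and $\u_i \leq \v_i \leq \alpha$ you get $\frac{1}{t^*} = \max_{i:\, \v_i > \u_i} \frac{\v_i - \u_i}{1 - \u_i} \leq \frac{\beta}{1 - \alpha}$, \emph{not} $\frac{\beta}{\alpha}$; the sharper version, using $\u_i \leq \alpha - (\v_i - \u_i)$, is $\frac{1}{t^*} \leq \frac{\beta}{1 - \alpha + \beta}$. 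This is at most $\frac{\beta}{\alpha}$ if and only if $2\alpha - \beta \leq 1$, so for $\alpha > 1/2$ your estimate does not deliver the stated conclusion. Equivalently, to reach $\frac{1}{t^*} \leq \frac{\alpha}{\beta}^{-1}$ you would need $\beta(1 - \u_i) \geq \alpha(\v_i - \u_i)$ on every increasing coordinate, and the hypotheses guarantee this for all admissible $\u, \v$ precisely when $2\alpha - \beta \leq 1$.

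This last step cannot be patched, because the lemma as stated is false when $2\alpha - \beta > 1$: take $n = 1$, $f(x) = 1 - x$ (DR-submodular and non-negative on $[0,1]$), $\u = 1/2$, $\v = 1$, $\alpha = 1$, $\beta = 1/2$; then $f(\u) - f(\v) = 1/2$ while $\frac{\beta}{\alpha} f(\u) = 1/4$. The paper's own proof hides the same assumption: it asserts without verification that $\w(\alpha/\beta) = \u + \frac{\alpha}{\beta}(\v - \u) \leq \one$, which is exactly the condition $t^* \geq \alpha/\beta$ that fails in this example. In the one place the lemma is invoked, $\alpha = 1 - (1-\eps)^j + \eps^2 \approx 1/e + \eps^2 < 1/2$ and $\beta = \delta$, so $2\alpha - \beta \leq 1$ and either your bound $\frac{\beta}{1-\alpha+\beta}$ or the claimed $\frac{\beta}{\alpha}$ yields the required $O(\delta) f(\a)$ loss. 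So your argument proves a correct and sufficient statement, but to make it match what is claimed you should either add the hypothesis $2\alpha - \beta \leq 1$ (or $\alpha \leq 1/2$) or restate the conclusion as $f(\u) - f(\v) \leq \frac{\beta}{1 - \alpha + \beta} f(\u)$.
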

\begin{proof}
For $t \in [0, 1]$, let $\w(t) := \u + (\v - \u) t$. The conditions in the lemma statement ensure that $\w(\alpha/\beta) \leq \one$. Using that $f$ is concave in non-negative directions and $f$ is non-negative, we obtain:
\[ f(\v) = f(\w(1)) \geq \left(1 - \frac{\beta}{\alpha} \right) f(\w(0)) + \frac{\beta}{\alpha} f(\w(\alpha / \beta)) \geq \left(1 - \frac{\beta}{\alpha} \right) f(\w(0)) = \left(1 - \frac{\beta}{\alpha} \right) f(\u)\]
The lemma now follows by rearranging the above inequality.
\end{proof}

We now fix an iteration of the algorithm (an iteration of the inner while loop) that updates $\x$ and $\z$ on lines~\ref{line:update-x-approx}--\ref{line:fz-larger-approx}. Let $\x, \z$ denote the vectors right before the update on lines~\ref{line:update-x-approx}--\ref{line:fz-larger-approx}. We define:
\begin{align*}
\eta^* & := \min\{\eta^*_1, \eta_2\}\\
\xp & := \x + \eta (\one - \x) \circ \one_{T(\eta - \delta)}\\
\a & := \x + (\eta - \delta) (\one - \x) \circ \one_{T(\eta - \delta)}\\
\b & := \x + \eta^* (\one - \x) \circ \one_{T(\eta^*)} + (\eta - \delta) (\one - \x) \circ \one_{T(\eta - \delta) \setminus T(\eta^*)}
\end{align*}
Note that we have $\eta - \delta \leq \eta^* \leq \eta$ and thus it follows from Lemma~\ref{lem:monotonicity} that $T(\eta - \delta) \supseteq T(\eta^*) \supseteq T(\eta)$.

We start by applying Lemma~\ref{lem:fn-value-drop}. Let $\u = \a$ and $\v = \xp$. We have $\xp \leq (1 - (1 - \eps)^j + \eps^2) \one$, and thus we can take $\alpha = 1 - (1 - \eps)^j + \eps^2 \leq 1 - (1 - \eps)^{1/\eps} + \eps^2 \approx \frac{1}{e} + \eps^2$. We have $\a \leq \xp$ and $\xp - \a \leq \delta \one$, and thus we can take $\beta = \delta$. It follows from Lemma~\ref{lem:fn-value-drop} that:
\[ f(\xp) - f(\a) \geq - O(\delta) f(\a) \geq - O(\delta) f(\xopt), \]
where in the second inequality we have used that $\a$ is feasible.

Next, we have:
\begin{align*}
f(\a) - f(\x) 
&\overset{(a)}{\geq} \langle \nabla f(\a), \a - \x \rangle\\
&= \langle \nabla f(\a), (\eta - \delta)  (\one - \x) \circ \one_{T(\eta - \delta)} \rangle\\
&\overset{(b)}{\geq} \langle \nabla f(\a), (\eta - \delta) (\one - \x) \circ \one_{T(\eta^*)} \rangle\\
&\overset{(c)}{\geq} \langle \nabla f(\z(\eta^*)), (\eta - \delta) (\one - \x) \circ \one_{T(\eta^*)} \rangle\\
&\overset{(d)}{\geq} \langle \g(\eta^*), (\eta - \delta) \one_{T(\eta^*)} \rangle
\end{align*}
In (a), we used that $f$ is concave in non-negative directions and $\a \geq \x$. We can show (b) as follows. As noted earlier, $T(\eta^*) \subseteq T(\eta - \delta)$. Since $\a \leq \z(\eta - \delta)$, we have $\nabla f(\a) \geq \nabla f(\z(\eta - \delta))$ by DR-submodularity, and thus $\nabla f(\a)$ is non-negative on the coordinates in $T(\eta - \delta)$. In (c), we have used that $\a \leq \z(\eta - \delta) \leq \z(\eta^*)$ and thus $\nabla f(\a) \geq \nabla f(\z(\eta^*))$ by DR-submodularity. In (d), we used that $\nabla f(\z(\eta^*))$ is non-negative on the coordinates in $T(\eta^*)$ and $\one - \x \geq \one - \z(\eta^*) \geq \zero$.

Similarly, we have:
\begin{align*}
f(\b) - f(\a)
&\overset{(a)}{\geq} \langle \nabla f(\b), \b - \a \rangle\\
&\overset{(b)}{=} \langle \nabla f(\b), (\eta^* - \eta + \delta) (\one - \x) \circ \one_{T(\eta^*)} \rangle\\
&\overset{(c)}{\geq} \langle \g(\eta^*), (\eta^* - \eta + \delta) \one_{T(\eta^*)} \rangle
\end{align*}
In (a), we used that $f$ is concave in non-negative directions and $\b \geq \a$. In (b), we used that $T(\eta - \delta) \supseteq T(\eta^*)$. We can show (c) as follows. Since $\eta - \delta \leq \eta^*$ and $T(\eta - \delta) \subseteq S$, we have $\b \leq \z(\eta^*)$. Thus $\nabla f(\b) \geq \nabla f(\z(\eta^*))$ by DR-submodularity and $\nabla f(\b)$ is non-negative on the coordinates of $T(\eta^*)$.

By combining the inequalities above, we obtain:
\[ f(\xp) - f(\x) \geq \langle \g(\eta^*), \eta^* \one_{T(\eta^*)} \rangle - O(\delta) f(\xopt) \]
Thus we see that the gain obtained in the iteration is the one required by the proof of Theorem~\ref{thm:phase-gain} apart from the additive loss of $O(\delta) f(\xopt)$. By propagating the additive loss through the proof of Theorem~\ref{thm:phase-gain}, we obtain a total loss of $O(\delta N) f(\xopt)$, where $N$ is the total number of iterations. As noted above, $O(\delta N) = O(\eps)$, as needed.

\section{DR-submodular algorithms}
\label{app:algorithms}

In this section, we give the pseudocode of the sequential and parallel algorithms evaluated in our experiments. The sequential algorithm we used is the continuous greedy algorithm shown in Algorithm~\ref{alg:cg}. The algorithm is a variant of the measured continuous greedy algorithm that was studied in previous works~\cite{ChekuriJV15,bian2017continuous}. This variant obtains higher function value in practice, since it allows for the possibility of filling up more of the available budget, and this is what we observed in our experiments as well. The state of the art parallel algorithm for non-monotone DR-submodular maximization subject to a cardinality constraint is the algorithm of \cite{ENV19}; Algorithm~\ref{alg:mwu} gives the pseudocode of this algorithm specialized to a single cardinality constraint.

\begin{algorithm}
\caption{A variant of the measured continuous greedy algorithm for $\max_{\vec{x} \in [0, 1]^n \colon \|\vec{x}\|_1 \leq k} f(\vec{x})$, where $f$ is a non-negative DR-submodular function.} 
\label{alg:cg}
\begin{algorithmic}[1]
\State $\x \gets \zero$
\State \LineComment{In our experiments, we used $\eta = \eps / n$}
\State $\eta \gets \eps / n^3$
\State $T \gets 1 / \eta$
\For{$t = 1$ to $T$}
  \State $\d \gets \arg\max_{\z \in [0, 1]^n \colon \z \leq \one - \x, \|\z\|_1 \leq k} \langle \nabla f(\x), \z \rangle$
  \State $\x \gets \x + \eta \d$
\EndFor
\State \Return $\x$
\end{algorithmic}
\end{algorithm}

\begin{algorithm}
\caption{The algorithm of \cite{ENV19} specialized to a single cardinality constraint. The algorithm solves the problem $\max_{\vec{x} \in [0, 1]^n \colon \|\vec{x}\|_1 \leq k} f(\vec{x})$, where $f$ is a non-negative DR-submodular function. The algorithm takes as input a target value $M$ such that $f(\xopt) \leq M \leq (1 + \eps) f(\xopt)$.} 
\label{alg:mwu}
\begin{algorithmic}[1]
\State $\eta \gets \frac{\eps}{2 \log(n + 1)}$
\State $\x \gets \frac{\eps}{n} \one$
\State $\z \gets \x$
\State \LineComment{MWU weights for the $(n + 1)$ constraints $\z_i \leq 1$ for all $i \in [n]$ and $\frac{1}{k} \langle \z, \one \rangle \leq 1$}
\State $\w_i \gets \exp(\z_i / \eta)$ for all $i \in [n]$
\State $\w_{n + 1} \gets \exp(\|\z\|_1 / (\eta k))$
\State $t \gets \eta \ln(\|\w\|_1)$
\While{$t < 1 - \eps$}
\State $\lambda \gets M \cdot (e^{-t} - 2\eps) - f(\x)$
\State $\c \gets (\one - \x) \circ \nabla f((1 + \eta) \x) \vee \zero$
\State $\m_i \gets \left( 1 - \lambda \cdot \frac{1}{\c_i} \cdot \frac{1}{\|\w\|_1} \left(\w_i + \frac{1}{k}\w_{n + 1}  \right) \right) \vee 0$ for all $i \in [n]$ with $\c_i \neq 0$, and $\m_i = 0$ if $\c_i = 0$
\State $\d \gets \eta \x \circ \m$
\If{$\d = \zero$}
  \State break
\EndIf
\State $\x \gets \x + \d \circ (\one - \x)$
\State $\z \gets \z + \d$
\State \LineComment{Update the weights}
\State $\w_i \gets \exp(\z_i / \eta)$ for all $i \in [n]$
\State $\w_{n + 1} \gets \exp(\|\z\|_1 / (\eta k))$
\EndWhile
\State \Return $\x$
\end{algorithmic}
\end{algorithm}
\end{document}